	\newcolumntype{d}[1]{D{.}{.}{#1|}}
\DeclareSymbolFontAlphabet{\mathbb}{AMSb}
\DeclareSymbolFontAlphabet{\mathbbl}{bbold}
\newcommand{\set}[1]{\{#1\}}
\DeclareMathOperator*{\argmin}{argmin}
\DeclarePairedDelimiter\floor{\lfloor}{\rfloor}
\newtheorem{remark}{Remark}
\newtheorem{theorem}{Theorem}
\newtheorem{definition}{Definition}
\newtheorem{proposition}{Proposition}
\newtheorem{example}{Example}
\newtheorem{fct}{Fact}
\renewcommand{\revision}[1]{#1}
\newcommand{\norm}[1]{\left\lVert#1\right\rVert}
\newcommand{\eat}[1]{}
\newcommand{\pgm}{\texttt{Private-PGM}\xspace}
\newcommand{\marginal}[1]{(\texttt{#1})}
\newcommand{\attr}[1]{\texttt{#1}}
\newcommand{\dom}{\Omega}
\newcommand{\domsp}{\dom^{\text{SPECS}}}
\newcommand{\domip}{\dom^{\text{IPUMS}}}
\newcommand{\mech}{\texttt{NIST-MST}\xspace}
\newcommand{\mst}{\texttt{MST}\xspace}
\newcommand{\INCWAGE}{\attr{INCWAGE}\xspace}
\newcommand{\INCWAGEA}{\attr{INCWAGE\textsubscript{A}}\xspace}
\newcommand{\INCWAGEB}{\attr{INCWAGE\textsubscript{B}}\xspace}
\newcommand{\algsize}{\footnotesize}
\title[Title]{Winning the NIST Contest: A scalable and general approach to differentially private synthetic data}
\author{Ryan McKenna}   %required
\address{College of Information \& Computer Sciences, The University of Massachusets, Amherst, MA 10002}
\email{rmckenna@cs.umass.edu}  %optional
\author{Gerome Miklau}   %required
\address{College of Information \& Computer Sciences, The University of Massachusets, Amherst, MA 10002}
\email{miklau@cs.umass.edu}  %optional
\author{Daniel Sheldon}   %required
\address{College of Information \& Computer Sciences, The University of Massachusets, Amherst, MA 10002}
\email{sheldon@cs.umass.edu}  %optional
\keywords{differential privacy, synthetic data, graphical models}
\begin{document} 

\maketitle 
 
\pagestyle{plain} 

\begin{abstract} 
We propose a general approach for differentially private synthetic data generation, that consists of three steps: (1) \textbf{select} a collection of low-dimensional marginals, (2) \textbf{measure} those marginals with a noise addition mechanism, and (3) \textbf{generate} synthetic data that preserves the measured marginals well.  Central to this approach is \pgm \cite{mckenna2019graphical}, a post-processing method that is used to estimate a high-dimensional data distribution from noisy measurements of its marginals.  We present two mechanisms, \mech and \mst, that are instances of this general approach.  \mech was the winning mechanism in the 2018 NIST differential privacy synthetic data competition, and \mst is a new mechanism that can work in more general settings, while still performing comparably to \mech.  We believe our general approach should be of broad interest, and can be adopted in future mechanisms for synthetic data generation.
\end{abstract}

\section{Introduction}

Data sharing within the modern enterprise is extremely constrained by privacy concerns.
Privacy-preserving synthetic data is an appealing solution: it allows existing analytics work-
flows and machine learning methods to be used while the original data remains protected.
But recent research has shown that unless a formal privacy standard is adopted, synthetic data can violate privacy in subtle ways \cite{dinur2003revealing,garfinkel2019understanding}.  Differential privacy offers such a formalism, and the problem of differentially private synthetic data generation has therefore received considerable research attention in recent years \cite{zhang2017privbayes,chen2015differentially,zhang2019differentially,xu2017dppro,xie2018differentially,torfi2020differentially,vietri2020new,liu2016model,torkzadehmahani2019dp,charest2011can,ge2020kamino,huang2019psyndb,jordon2018pate,zhang2018differentially,tantipongpipat2019differentially,abay2018privacy,bindschaedler2017plausible,zhang2020privsyn,asghar2019differentially,li2014differentially}.

In 2018, the National Institute of Standards and Technology (NIST) highlighted the importance of this problem by organizing the  {\em Differential Privacy Synthetic Data Competition} \cite{nist}.  This competition was the first of its kind for the privacy research community, and it encouraged privacy researchers and practitioners to develop novel practical mechanisms for this task. The competition consisted of three rounds of increasing complexity. In this paper we describe \mech, the winning entry in the third and final round of the competition. Our algorithm is an instance of a general template for differentially private synthetic data generation that we believe will simplify design of future mechanisms for synthetic data. 
%\mech is a particular instantiation of this more general template tailored to the competition.  

Our approach to differentially private synthetic data generation consists of three high-level steps, as show in \cref{fig:template}: (1) query selection, (2) query measurement and (3) synthetic data generation.  For step (1), there are various ways to approach query selection; a domain expert familiar with the data and its use cases can specify the set of queries, or they can be automatically determined by an algorithm.  The selected queries are important because they will ultimately determine the statistics for which the synthetic data preserves accuracy. For step (2), after the queries are fixed, they are measured privately with a noise-addition mechanism, in our case, with the Gaussian mechanism.  In step (3), the noisy measurements are processed through \pgm \cite{mckenna2019graphical}, a post-processing method that can estimate a high-dimensional data distribution from noisy measurements and generate synthetic data. 
 
This approach is similar in spirit to the widely studied select-measure-reconstruct paradigm for linear query answering under differential privacy \cite{zhang16privtree,yuan2016convex,li2015matrix,zhangtowards,xiao2014dpcube,qardaji2014priview,li2014data,Yaroslavtsev13Accurate,xu2013differential,qardaji2013understanding,qardaji2013differentially,yuan2012low,xu12histogram,li2012adaptive,cormode2012differentially,Acs2012compression,xiao2011differential,ding2011differentially,li2010optimizing,hay2010boosting,li2015matrix,mckenna2018optimizing}.  However, the output is now synthetic data, rather than query answers.  In addition, most existing methods from this paradigm suffer from the curse of dimensionality, and have trouble scaling to high-dimensional domains.  Our approach is simple and modular but there are three main technical challenges to using it in practice.  These are 
(1) identifying what statistics to measure about the dataset, 
(2) generating synthetic data that effectively preserves the measured statistics,
and (3) overcoming the challenges of high-dimensional domains.
Fortunately, \pgm solves problem (2) and (3) above, as long as the measured statistics only depend on the data through its low-dimensional marginals.  This allows the mechanism designer to focus on problem (1), and frees them from the burden of figuring out \emph{how} to generate synthetic data with differential privacy, allowing them instead to focus on \emph{what} statistics to measure, based on what they want the synthetic data to preserve.  Thus, we believe this approach to differentially privacy synthetic data, using \pgm, will be broadly applicable.

\begin{figure}[t]
\includegraphics[width=\textwidth]{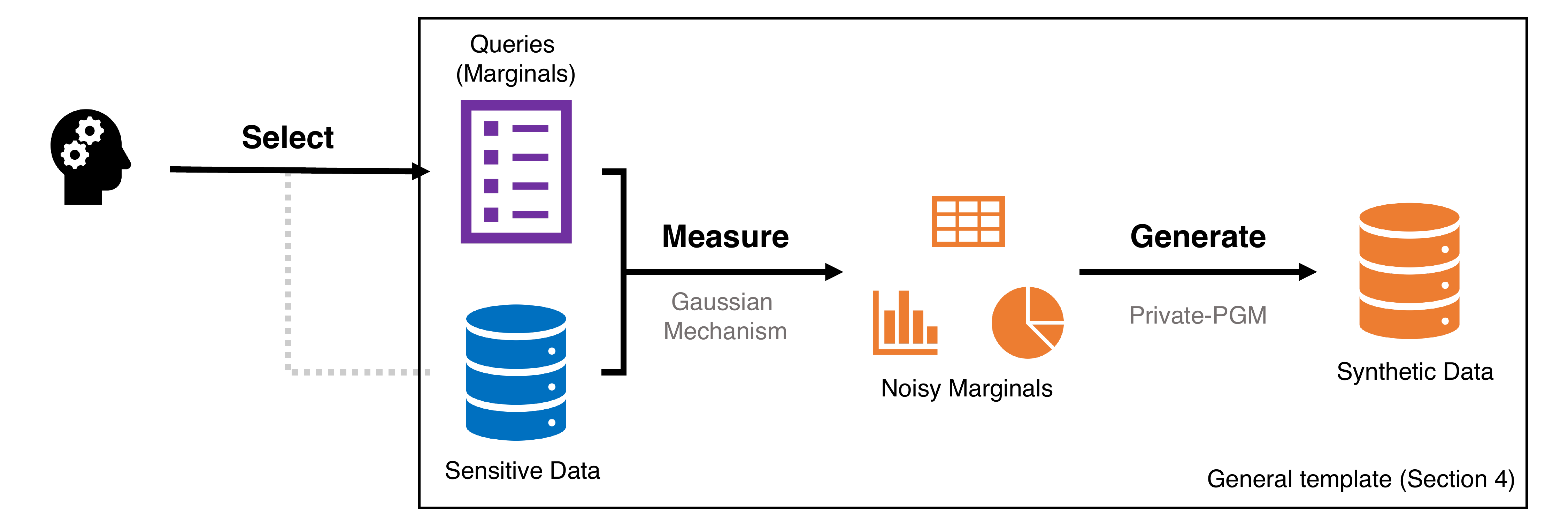}
\caption{\label{fig:template}A general template for differentially private synthetic data generation.  First, a collection of marginal queries is \textbf{selected}, either manually (e.g., by a domain expert) or automatically by an algorithm. Second, the Gaussian mechanism is used to \textbf{measure} those marginals while preserving differential privacy.  Finally, \pgm is used to post-process the noisy marginals and \textbf{generate} a synthetic dataset that respects them.}
%\gm{inconsistent terminology with beginning of intro.} 
%\gm{graphics for noisy marginals could be simplified.} \ry{not sure what you have in mind, diagram2.pptx is included in fig if you want to edit it}}
\end{figure}
%\footnotetext{This marginal selection step may or may not depend on the sensitive data.  If it does depend on the sensitive data, it must be via a differentially private mechanism.\gm{this could be removed or rephrased given paragraph on MST.}\ry{I think rephrase is fine, not sure what you have in mind.  I think there needs to be some explanation of the dotted gray line, although I dont think it deserves a place directly in the figure caption.}}

%\gm{Ryan: The MST algorithm (without provisional data) should be more directly introduced in this intro. Here is a start:} 
In \cref{fig:template}, there is a dashed gray line connecting the select step with the sensitive data.  This indicates that query selection may or may not depend on the sensitive data, but if it does, it must be via a differentially private mechanism.  The rules of the NIST competition permitted the use of a public provisional dataset which the \mech algorithm uses for query selection. The effectiveness of query selection relies on the similarity of the public data to the private data being synthesized. Since high quality provisional data may not always be available, we propose a variant of the algorithm, called \mst, that does not require public data and instead uses a portion of the privacy budget to select measurements.  The novelty of this algorithm is that it uses the data (privately) to select marginals to measure that support efficient synthesis in step (3). This extension leads to an algorithm that can be applied in a wider variety of settings.  We show experimentally that, without the advantage of provisional data, it nevertheless performs comparably to \mech.

This paper is organized as follows.  In \cref{sec:background}, we set up notation, state assumptions, and summarize relevant background in differential privacy.  In \cref{sec:problem}, we summarize the important aspects of the competition.  In \cref{sec:pgm}, we describe the general template from \cref{fig:template} in greater detail.  In \cref{sec:mech}, we present \mech, the winning mechanism from the competition, by building on the general template.  In \cref{sec:extensions}, we present \mst, a novel mechanism inspired by \mech that does not rely on the existence of public provisional data.  We conclude with a simple experimental evaluation and discussion of results.

\section{Background} \label{sec:background}

\def\alg{{\mathcal M}}
\def\db{\mathbf{X}}
\def\nbrs{\textrm{nbrs}}

\newcommand{\compl}[1]{{#1}^{-}}

The algorithms described in this paper take as input a dataset, assumed to be a single table, and generate a synthetic dataset satisfying $(\epsilon, \delta)$-differential privacy.  Below we provide the relevant background and notation on datasets, marginals, and differential privacy.

\subsection{Data}

The input is a dataset $D$ consisting of $m$ records, each containing potentially sensitive information about one individual.  Each record has $d$ attributes $ \mathcal{A} = \set{A_1, \dots, A_d} $, and the domain of possible values for an attribute $A_i$ is denoted by $\Omega_i$.  We assume $\Omega_i$ is finite and has size $ | \Omega_i | = n_i$.   The full domain of possible values is thus $ \Omega = \Omega_1 \times \dots \times \Omega_d$ which has size $\prod_i n_i = n $.  We use $\mathcal{D}$ to denote the set of all possible datasets, which is equal to $ \mathcal{D} = \cup_{m=0}^{\infty} \Omega^m $. %A dataset can be represented either by listing the $m$ tuples, or by a vector of length $n$ indexed by domain elements $\omega \in \Omega$, where each entry counts the number of tuples equal to $\omega$. We will refer to the latter representation as the full \emph{contingency table} of $D$, also sometimes called the data vector corresponding to $D$.

\subsection{Marginals}

A marginal is a key statistic that captures low-dimensional structure in a high-dimensional data distribution.   
We will explain (in Section \ref{sec:problem}) that the evaluation metrics of the contest can be defined in terms of marginals computed on the dataset.  In addition, our algorithms will privately measure selected marginals and use the resulting noisy measurements to construct synthetic data.  More precisely, a marginal, for a set of attributes $C$, is a table that counts the number of occurrences of each combination of possible values for attributes $C$.

\begin{definition}[Marginal]
Let $C \subseteq \mathcal{A}$ be a subset of attributes, $\Omega_C = \prod_{i \in C} \Omega_i$, and $n_C = | \Omega_C |$.  The marginal on $C$ is a vector $\mu \in \mathbb{R}^{n_C}$, indexed by domain elements $t \in \Omega_C$, such that each entry is a count, i.e., $\mu_t = \sum_{x \in D} \mathbbm{1}[x_C = t]$.  We let $M_C : \mathcal{D} \rightarrow \mathbb{R}^{n_C}$ denote the function that computes the marginal on $C$, i.e., $ \mu = M_C(D) $.  
\end{definition}

\subsection{Differential privacy}
Differential privacy \cite{dwork2006calibrating,Dwork14Algorithmic} protects individuals by bounding the impact any one individual can have on the output of an algorithm. This is formalized using the notion of neighboring datasets.  Two datasets $D, D' \in \mathcal{D}$ are neighbors (denoted $ D \sim D'$) if $D'$ can be obtained from $D$ by adding or removing a single record.  

%\ds{Consider adding a sentence pointing to a generic reference for readers who don’t have any background with privacy. This background section is very streamlined but doesn’t describe the ideas behind the definitions}

\begin{definition}[Differential Privacy~\cite{dwork2006calibrating}] \label{def:dp}
A randomized mechanism $\alg : \mathcal{D} \rightarrow \mathcal{R} $ satisfies $(\epsilon, \delta)$-differential privacy (DP) if for any neighboring datasets $D \sim D' \in \mathcal{D}$, and any subset of possible outputs $S \subseteq \mathcal{R}$, 
$$ \Pr[\alg(D) \in S] \leq \exp(\epsilon) \Pr[\alg(D') \in S] + \delta.$$
\end{definition}

This definition requires that, on any two neighboring input databases, the difference in the output distributions of the randomized algorithm $\mathcal{M}$ is bounded by $e^\epsilon$, except with a small failure probability $\delta$. This failure probability $\delta$ is usually assumed to be cryptographically small; in the contest it was set to $\delta \approx 2 \cdot 10^{-12}$.  The algorithms in this paper achieve differential privacy by repeated application of the Gaussian mechanism \revision{and the Exponential Mechanism}, defined below:

\begin{definition}[Gaussian Mechanism]
Let $f : \mathcal{D} \rightarrow \mathbb{R}^p $ be a vector-valued function of the input data. The Gaussian Mechanism adds i.i.d. Gaussian noise with scale $\sigma$ to $f(D)$:
$$ \alg(D) = f(D) + \mathcal{N}(0, \sigma^2 \mathbf{I}).$$
\end{definition}
\revision{
\begin{definition}[Exponential Mechanism]
Let $q : \mathcal{D} \times \mathcal{R} \rightarrow \mathbb{R}$ be quality score function and $\epsilon$ be a parameter.  Then the exponential mechanism outputs a candidate $r \in \mathcal{R}$ according to the following distribution:
\vspace{-0.5em}
$$ \Pr[\mathcal{M}(D) = r] \propto \exp{\Big( \epsilon \cdot q(D, r) \Big)} $$
\end{definition}}

To accurately analyze the privacy of multiple invocations of the Gaussian/\revision{Exponential} mechanisms (i.e., to derive the $(\epsilon, \delta)$ parameters) we use the tools of R\'enyi Differential Privacy (RDP), a variant of differential privacy so named because it uses the R\'enyi divergence in the bound on a mechanism's output distributions for neighboring inputs.%\footnote{zero-Concentrated Differential Privacy \cite{bun2016concentrated} is an alternative privacy definition closely related to RDP, although we will use the language of RDP in this paper.}

\begin{definition}[R\'enyi Differential Privacy \cite{mironov2017renyi}] 
A randomized mechanism $\alg : \mathcal{D} \rightarrow \mathcal{R}$ satisfies $(\alpha, \gamma)$-R\'enyi differential privacy (RDP) for $\alpha \geq 1$ and $\gamma \geq 0$, if for any neighboring datasets $D \sim D' \in \mathcal{D}$, we have:
$$ D_{\alpha}(\alg(D) \mid\mid \alg(D')) \leq \gamma, $$
where $D_{\alpha}(\cdot \mid \mid \cdot)$ is the R\'enyi divergence of order $\alpha$ between two probability distributions.
\end{definition} 

To analyze the privacy of the \revision{mechanisms above} under R\'enyi-DP, we define the sensitivity of a vector-valued query as follows:

\begin{definition}[Sensitivity]
Let $f : \mathcal{D} \rightarrow \mathbb{R}^p $ be a vector-valued function of the input data.  The $L_2$ sensitivity of $f$ is $\Delta_f = \max_{D \sim D'} \norm{f(D) - f(D')}_2$.  
\end{definition}

It is easy to verify that the $L_2$ sensitivity of any marginal function $M_C$ is $1$, regardless of the attributes in $C$. This is because one individual can only contribute a count of $1$ to a single cell of the output vector.  A single invocation of the Gaussian Mechanism satisfies R\'enyi-DP with parameters determined by the noise scale $\sigma$ and the sensitivity $\Delta_f$ of the function.  \revision{Similarly, a single invocation of the Exponential Mechanism also satisfies R\'enyi-DP with parameters determined by $\epsilon$ and $\Delta_q$.}

\begin{proposition}[R\'enyi-DP of the Gaussian Mechanism \cite{mironov2017renyi,feldman2018privacy}] \label{prop:rdpgauss}
The Gaussian Mechanism applied to the function $f : \mathcal{D} \rightarrow \mathbb{R}^p$ satisfies $\Big( \alpha, \alpha \frac{\Delta_f^2}{2 \sigma^2} \Big)$-RDP for all $\alpha \geq 1$.
%\footnote{Note that the original proof in \cite{mironov2017renyi} only covers scalar valued functions, but the statement is true for vector valued functions as well \cite{feldman2018privacy}.}
\end{proposition}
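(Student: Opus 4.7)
The plan is to bound the R\'enyi divergence $D_\alpha(\alg(D) \,\|\, \alg(D'))$ directly from the definition and then take a supremum over neighboring pairs. Since $\alg(D) \sim \mathcal{N}(f(D), \sigma^2 \mathbf{I})$ and $\alg(D') \sim \mathcal{N}(f(D'), \sigma^2 \mathbf{I})$, the two output distributions are multivariate Gaussians with identical (isotropic) covariance, differing only by the mean shift $v := f(D) - f(D') \in \mathbb{R}^p$. So the whole problem reduces to computing the R\'enyi divergence between two spherical Gaussians that differ by a translation.

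First I would invoke (or derive) the standard closed-form identity
\[
D_\alpha\bigl(\mathcal{N}(\mu_1, \Sigma) \,\|\, \mathcal{N}(\mu_2, \Sigma)\bigr) \;=\; \frac{\alpha}{2}\,(\mu_1 - \mu_2)^\top \Sigma^{-1} (\mu_1 - \mu_2),
\]
valid for any $\alpha \geq 1$. If I need to derive it, I would write out the definition $D_\alpha(P \| Q) = \frac{1}{\alpha-1} \log \int p(x)^\alpha q(x)^{1-\alpha}\,dx$, plug in the Gaussian densities, collect the quadratic terms in the exponent, complete the square, and recognize the resulting integral as a Gaussian normalizer. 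The cross terms contribute exactly $\alpha(\alpha-1)/2 \cdot \|\mu_1 - \mu_2\|_{\Sigma^{-1}}^2$ in the exponent, and dividing by $\alpha - 1$ gives the stated formula. This step is where all the real computation lives; everything else is bookkeeping.

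Next, specializing to $\Sigma = \sigma^2 \mathbf{I}$ and $\mu_1 - \mu_2 = v$, the divergence becomes $\frac{\alpha}{2\sigma^2} \|v\|_2^2 = \frac{\alpha}{2\sigma^2} \|f(D) - f(D')\|_2^2$. Finally, by the definition of $L_2$ sensitivity, $\|f(D) - f(D')\|_2 \leq \Delta_f$ for all $D \sim D'$, so
\[
D_\alpha\bigl(\alg(D) \,\|\, \alg(D')\bigr) \;\leq\; \frac{\alpha\,\Delta_f^2}{2\sigma^2},
\]
which matches the claim. Taking a supremum over $D \sim D'$ then yields $(\alpha, \alpha\Delta_f^2/(2\sigma^2))$-RDP.

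The main obstacle is the Gaussian R\'enyi-divergence computation itself: the algebra of completing the square inside the exponent is slightly fussy, and one has to track the $1/(\alpha-1)$ factor carefully to see that the answer is linear in $\alpha$ (rather than quadratic). Everything after that is essentially a one-line sensitivity bound, so I would probably just cite the closed-form identity from a standard reference (e.g., the original R\'enyi-DP paper \cite{mironov2017renyi}) rather than re-derive it in the text.
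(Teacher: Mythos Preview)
Your argument is correct and is exactly the standard proof: reduce to the closed-form R\'enyi divergence between two spherical Gaussians with a common covariance, then bound the mean shift by the $L_2$ sensitivity. The paper does not actually give a proof of this proposition at all; it simply cites \cite{mironov2017renyi,feldman2018privacy} and uses the result as a black box, so your write-up already goes beyond what the paper contains.
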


\revision{
\begin{proposition}[R\'enyi-DP of the Exponential Mechanism \cite{mcsherry2007mechanism,cesar2021bounding}] \label{prop:expprivacy}
The Exponential Mechanism applied to the quality score function $q : \mathcal{D} \times \mathcal{R} \rightarrow \mathbb{R}$ satisfies $(2 \epsilon \Delta, 0)$-DP and $(\alpha, \alpha \frac{(2 \epsilon \Delta)^2}{8})$-RDP for all $\alpha \geq 1$, where $\Delta = \max_{r \in \mathcal{R}} \Delta_{q(\cdot,r)}$ is the maximum sensitivity of $q$. 
\end{proposition}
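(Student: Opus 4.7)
The plan is to prove the pure DP guarantee first, via the classical McSherry--Talwar factorization argument, and then upgrade to the RDP statement by establishing a \emph{bounded range} property of the privacy loss that is strictly stronger than a bound on its magnitude.

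For the $(2\epsilon\Delta, 0)$-DP claim, I would fix any neighbors $D \sim D'$ and any output $r \in \mathcal{R}$, and split the likelihood ratio as
\begin{equation*}
\frac{\Pr[\mathcal{M}(D) = r]}{\Pr[\mathcal{M}(D') = r]} = \frac{\exp(\epsilon\, q(D, r))}{\exp(\epsilon\, q(D', r))} \cdot \frac{Z(D')}{Z(D)},
\end{equation*}
where $Z(D) := \sum_{r' \in \mathcal{R}} \exp(\epsilon\, q(D, r'))$. The first factor is at most $\exp(\epsilon \Delta)$ by definition of $\Delta$. Since every summand of $Z(D')$ is at most $\exp(\epsilon \Delta)$ times the corresponding summand of $Z(D)$, the second factor is also at most $\exp(\epsilon \Delta)$. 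Multiplying gives the claimed $\exp(2\epsilon\Delta)$ ratio.

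For the RDP claim, the obstacle is that a black-box $\epsilon_0$-DP to RDP conversion (e.g., Bun--Steinke) yields only $\alpha \epsilon_0^2 / 2$ with $\epsilon_0 = 2\epsilon\Delta$, a factor of four worse than the stated $\alpha(2\epsilon\Delta)^2/8$. To recover the sharper constant I would use the observation that the privacy loss random variable
\begin{equation*}
L(r) := \log\frac{\Pr[\mathcal{M}(D) = r]}{\Pr[\mathcal{M}(D') = r]} = \epsilon\bigl(q(D, r) - q(D', r)\bigr) + \log\frac{Z(D')}{Z(D)}
\end{equation*}
is an affine function of $q(D, r) - q(D', r) \in [-\Delta, \Delta]$ shifted by an $r$-independent constant. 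Hence, as $r$ ranges over $\mathcal{R}$, $L(r)$ takes values in an interval of width $2\epsilon\Delta$: the exponential mechanism has \emph{bounded range} $2\epsilon\Delta$, not merely bounded magnitude.

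Finally I would invoke the bounded-range to RDP lemma of Cesar and Rogers \cite{cesar2021bounding}, which converts any $\epsilon_{BR}$-bounded-range mechanism into an $(\alpha, \alpha\epsilon_{BR}^2/8)$-RDP guarantee; taking $\epsilon_{BR} = 2\epsilon\Delta$ closes the proof. The technical step I expect to demand the most care is this last lemma: a naive Hoeffding bound on the centered moment generating function of $L$ overshoots by a factor of roughly $\alpha/(\alpha - 1)$. Eliminating that slack requires combining the range bound with the identity $\mathbb{E}_{r \sim \mathcal{M}(D')}[e^{L(r)}] = 1$, which forces $\mathbb{E}[L]$ to be negative enough to absorb the extra term. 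That cancellation is the content of the cited lemma, which I would apply as a black box rather than re-derive.
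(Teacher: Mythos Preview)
The paper does not supply its own proof of this proposition: it is stated as a background result and attributed to \cite{mcsherry2007mechanism,cesar2021bounding}. Your argument is correct and tracks exactly the approach of those references---the McSherry--Talwar ratio bound for the $(2\epsilon\Delta,0)$-DP part, and the bounded-range observation plus the Cesar--Rogers lemma for the RDP part---so there is nothing to compare against beyond noting that you have reconstructed the cited proofs faithfully.
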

Note that any mechanism that is $(\alpha, \alpha \rho)$-RDP for all $\alpha \geq 1$ is also $\rho$-zCDP \cite{bun2016concentrated} and vice-versa.  
%We use the language of RDP in this paper, however.
We rely on the following propositions to reason about multiple adaptive invocations of RDP mechanisms, and the translation between R\'enyi-DP and $(\epsilon, \delta)$-DP.}
%The above proposition allows us to derive the R\'enyi-DP parameters for  multiple (adaptive) applications of the Gaussian Mechanism. We then translate the R\'enyi-DP guarantee to $(\epsilon, \delta)$-DP using the following result:

\begin{proposition}[Adaptive Composition of RDP Mechanisms \cite{mironov2017renyi}] \label{prop:composition}
Let $\alg_1 : \mathcal{D} \rightarrow \mathcal{R}_1$ be $ (\alpha, \gamma_1)$-RDP and $\alg_2 : \mathcal{D} \times \mathcal{R}_1 \rightarrow \mathcal{R}_2$ be $(\alpha, \gamma_2)$-RDP.  Then the mechanism $\alg = \alg_2(D, \alg_1(D))$ is $(\alpha, \gamma_1 + \gamma_2)$-RDP.
\end{proposition}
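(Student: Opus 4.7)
The plan is to reduce the claim to a direct computation of the Rényi divergence of order $\alpha$ between the joint output distributions of the pair $(\alg_1(D), \alg_2(D, \alg_1(D)))$ under neighboring datasets $D \sim D'$, and then apply the data processing inequality for Rényi divergence to pass from the joint to the marginal output of $\alg$.

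First, I would factor the joint density of the composed output as $p_D(r_1, r_2) = p_{\alg_1(D)}(r_1) \cdot p_{\alg_2(D, r_1)}(r_2)$, and analogously for $D'$. Substituting this into the defining integral
$$D_\alpha(P_D \,\|\, P_{D'}) = \frac{1}{\alpha - 1} \log \int p_D(r_1, r_2)^\alpha \, p_{D'}(r_1, r_2)^{1-\alpha} \, dr_1 \, dr_2,$$
the integrand factors cleanly into a product of two terms, one depending only on $r_1$ and one depending on both $r_1$ and $r_2$. Using Fubini, I would integrate over $r_2$ first: the hypothesis that $\alg_2(\cdot, r_1)$ is $(\alpha, \gamma_2)$-RDP (read uniformly over the auxiliary input $r_1$, which is the standard convention for adaptive composition) bounds the inner integral by $\exp((\alpha-1)\gamma_2)$ for every $r_1$. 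The remaining outer integral in $r_1$ is bounded by $\exp((\alpha-1)\gamma_1)$ via the RDP guarantee for $\alg_1$. Combining the two bounds and taking $\frac{1}{\alpha-1}\log$ yields $D_\alpha(P_D \,\|\, P_{D'}) \le \gamma_1 + \gamma_2$.

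Finally, since $\alg(D) = \alg_2(D, \alg_1(D))$ is obtained from the joint pair $(\alg_1(D), \alg_2(D, \alg_1(D)))$ by the deterministic projection onto the second coordinate, the data processing inequality for Rényi divergence gives $D_\alpha(\alg(D) \,\|\, \alg(D')) \leq \gamma_1 + \gamma_2$, which is the desired RDP bound.

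The main subtlety, rather than the calculation itself, is the correct interpretation of the adaptive RDP assumption on $\alg_2$: the bound $\gamma_2$ must hold uniformly over the auxiliary input $r_1 \in \mathcal{R}_1$, and in particular the same $\alpha$ must be used for both mechanisms. Once this is fixed, the factorization of the joint density together with Fubini makes the argument transparent, and no regularity issues arise beyond those already implicit in defining the Rényi divergence.
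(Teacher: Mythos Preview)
The paper does not give its own proof of this proposition; it is quoted as a background result from Mironov (2017) and used without argument. Your proof is correct and is essentially the standard argument from that reference: factor the joint density, apply the uniform-in-$r_1$ RDP bound on $\alg_2$ to the inner integral, then the RDP bound on $\alg_1$ to the outer integral, and use data processing to pass from the joint $(r_1,r_2)$ to the marginal $r_2$.
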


\begin{proposition}[RDP to DP \cite{mironov2017renyi}] \label{prop:rdpdp}
If a mechanism $\alg$ satisfies $(\alpha, \gamma)$-R\'enyi differential privacy, it also satisfies 
 $\Big(\gamma + \frac{\log{(1/\delta)}}{\alpha-1}, \delta \Big)$-differential privacy for all $\delta \in (0,1]$.
\end{proposition}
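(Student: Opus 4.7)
The plan is to pass through the privacy loss random variable. Let $P$ and $Q$ denote the output distributions $\alg(D)$ and $\alg(D')$ on neighboring inputs, and define the privacy loss $L(x) = \log\bigl(P(x)/Q(x)\bigr)$. The RDP hypothesis can be rewritten as a moment bound on $L$: by the definition of the R\'enyi divergence of order $\alpha$,
\[
\mathbb{E}_{x \sim P}\bigl[e^{(\alpha-1) L(x)}\bigr] \;=\; \mathbb{E}_{x \sim Q}\!\left[\left(\tfrac{P(x)}{Q(x)}\right)^{\!\alpha}\right] \;=\; e^{(\alpha-1) D_\alpha(P \,\|\, Q)} \;\leq\; e^{(\alpha-1)\gamma}.
\]
First I would record this identity; it is the one place the RDP assumption is used.

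Next I would apply Markov's inequality to $e^{(\alpha-1) L}$ under $P$ to control how often the privacy loss is large. For any threshold $\epsilon$,
\[
\Pr_{x \sim P}\bigl[L(x) \geq \epsilon\bigr] \;=\; \Pr_{x \sim P}\bigl[e^{(\alpha-1) L(x)} \geq e^{(\alpha-1)\epsilon}\bigr] \;\leq\; e^{(\alpha-1)(\gamma - \epsilon)}.
\]
Now choose $\epsilon = \gamma + \frac{\log(1/\delta)}{\alpha - 1}$, which is exactly the calibration that makes this tail bound equal to $\delta$.

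Finally I would split an arbitrary output set $S \subseteq \mathcal{R}$ according to whether the privacy loss is below $\epsilon$: writing $B = \{x : L(x) \geq \epsilon\}$, we have
\[
\Pr_P[S] \;=\; \Pr_P[S \cap B^c] + \Pr_P[S \cap B] \;\leq\; e^{\epsilon}\Pr_Q[S \cap B^c] + \delta \;\leq\; e^{\epsilon}\Pr_Q[S] + \delta,
\]
where the first inequality uses the pointwise bound $P(x) \leq e^{\epsilon} Q(x)$ on $B^c$, together with the tail bound from the previous step. This is exactly the $(\epsilon, \delta)$-DP guarantee with the claimed value of $\epsilon$.

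The main obstacle is really just bookkeeping: ensuring that the one-sided Markov bound is applied under the correct measure ($P$, not $Q$), and being careful that the event $B$ is defined via the Radon--Nikodym derivative so that the pointwise inequality $P(x) \leq e^\epsilon Q(x)$ on $B^c$ is genuine. There is a minor subtlety when $P$ is not absolutely continuous with respect to $Q$, but in that case $D_\alpha(P \,\|\, Q) = \infty$ and the hypothesis $\gamma < \infty$ rules this out, so no extra care is needed.
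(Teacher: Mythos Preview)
Your argument is correct and is precisely the standard proof of this conversion (the one given in Mironov's original RDP paper that the proposition cites). Note, however, that the present paper does not actually prove this proposition at all: it is stated as a background result with a citation and used as a black box in the privacy accounting, so there is no ``paper's own proof'' to compare against. Your write-up would serve perfectly well as a self-contained justification if one were desired.
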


%The privacy analysis of the algorithms presented in this paper rely on the steps outlined above. 

%\input{graphical_models}
\section{Competition setup} \label{sec:problem}

In this section we will summarize the format of the competition and the different components of the challenge problem.  The competition consisted of three rounds of increasing complexity, but our focus is on the third round, which built on the previous two rounds.   

\subsection{Competition Format} 

Competitors were given approximately one month to design their differentially private synthetic data mechanism.  The competition organizers provided contestants with a precise problem specification along with a ``competitor pack'', which included a provisional dataset to test and develop mechanisms, a file that contained domain information for each attribute in the dataset, a script to evaluate the quality of the synthetic data according to their custom scoring criteria, and a baseline mechanism. Each of these components will be described in detail in the subsequent sections.  %For the full details of how the challenge was organized and run, please refer to the official challenge website.  
During the one-month competition period, competitors could submit the synthetic data produced by their mechanism to be scored and attain a spot on the provisional leaderboard.  This was a good way to gauge how well other competitors were doing, although it was not an authoritative source as the submissions had not been vetted to ensure they satisfied differential privacy (e.g., someone could submit the true data and get a perfect score on the provisional leaderboard). 

At the end of the competition period, competitors had to submit their source code along with a document describing the solution and a proof of privacy.   A team of experts unknown to the competitors checked the final submitted algorithms to ensure that they satisfied $(\epsilon, \delta)$-differential privacy.  This was done by checking the written description of the algorithm as well as the source code, to ensure there were no privacy violations or mistakes. After the mechanism was verified to be differentially private, its utility was evaluated on the final dataset (different from the provisional dataset) and performance results were added to a final leaderboard that would determine the ranking of solutions.   

In the subsequent subsections, we will specify the details of each component of the problem --- i.e., the dataset, the domain, and the evaluation criteria.  

\subsection{Dataset} \label{sec:dataset}

Algorithms were designed for and evaluated on data from the 1940 U.S. decennial census.   The provisional dataset contained data for one state (Colorado) and the final holdout data was for a different state (unknown at the time of the competition).  
We remark that the provisional dataset was treated as public information.  Therefore, any analysis and insights derived from it were not considered to violate privacy. However, solutions that used information from the provisional dataset too aggressively risked over-fitting, and scoring poorly on the final dataset.   
The provisional dataset contained 98 attributes and about 661 thousand records.  All attributes were discrete, taking on values from the domain $\Omega_i = \set{0, \dots, n_i-1}$.  The value of $n_i$ for each attribute $i$ was provided in a separate specs file.  The values of $n_i$ ranged from $2$ (for binary attributes like \attr{SEX}) to $10,000,000$ (for numerical attributes like \attr{INCWAGE}).  The total number of possible database rows (i.e., the full domain size) was about $ | \dom | = 5 \times 10^{205} $.  We provide a full breakdown of the domain in \cref{table:domain}.

\subsection{Evaluation metrics} \label{sec:eval}

The utility of the synthetic data was measured by how well it preserved key statistics in the ground truth data with respect to three main criteria, enumerated below.  We state below the statistics that need to be preserved to score well, but not the exact formula for calculating score.  For the precise information, please refer to the official challenge problem statement \cite{nist}.  
Note that the scores for each evaluation metric were normalized to the same range, and averaged across the three metrics (with equal weights).  Algorithms were evaluated at three privacy levels, with $\epsilon = 0.3, 1.0, 8.0$ and $\delta \approx 2 \cdot 10^{-12}$, and these scores were averaged to obtain the final score.  Computational efficiency was not taken into consideration; several of the solutions (including \mech) required up to 10 minutes or more to run. 

\begin{enumerate}
\item \textbf{3-way Marginals.}
The synthetic data was evaluated by comparing its marginals with the marginals of the true data for $100$ random triples of attributes, unknown to competitors at submission time. Therefore, a synthetic dataset $\tilde{D}$ scores well on this metric (in expectation) if $M_C(D) \approx M_C(\tilde{D})$ \emph{for all} triples $C$.  There are a total of $\binom{98}{3}=152096$ possible triples, so this evaluation criteria requires the synthetic data to preserve a large number of marginals to consistently score well.
\item \textbf{High-order conjunctions.}
The synthetic data was evaluated by looking at how well it preserved high-order conjunctions.  Probabilistically, a high-order conjunction for a set of attributes $C \subseteq \mathcal{A} $ assumes the form $\Pr[\bigwedge_{i \in C} [t_i \in S_i] \mid t \sim D]$, where $S_i \subseteq \Omega_i$ is a subset of the domain for attribute $i$.  This quantity can be expressed in terms of the marginal $\mu = M_C(D)$ via $\sum_{t \in S} \mu_t $, where $S$ is the Cartesian product of $S_i$'s, i.e., $S = \prod_{i \in C} S_i$ and $t$ is a tuple restricted to the attributes in the set $C$.  Synthetic data was evaluated on $300$ random high-order conjunctions, where $C$ is generated with a simple random sample of the attributes $\mathcal{A}$ with selection probability $0.1$, and $S_i$ is a random subset of $\Omega_i$. There are a total of $2^{98} \approx 10^{29}$ possible choices for $C$, and the expected size of $C$ is $ 0.1 \cdot 98 \approx 10$.  Even without accounting for the variability in $S_i$, it is clear that the number of statistics that need to be preserved is enormous. 
\item \textbf{Income inequality and gender wage gap.}
The synthetic data was evaluated by how well it preserved statistics relating to income inequality and gender wage gap, broken down by city.  While the precise details of this metric can be found in the official problem statement, to score well, it suffices for the synthetic data to be accurate with respect to the marginal on \marginal{SEX,CITY,INCWAGE}.  Unlike metrics (1) and (2), above, this metric is relatively easy to score well on, because it just requires preserving one marginal well.
%For (i) the Gini index of Income for each city (which is a measure of income inequality relying on the full Income distribution); and (ii) the rank of cities by gender wage gap (which relied on average income by gender).  To score well on this metric, it suffices for the synthetic data to be accurate with respect to the marginal on (CITY, INCWAGE, SEX).
\end{enumerate}

%\vspace{-2em}
\section{Overview of Measurement and Inference with \pgm} \label{sec:pgm}
\begin{figure} 
\begin{python}
from private_pgm import FactoredInference
from scipy.sparse import identity
from numpy.random import normal

data = load_NIST()
queries = [("SEX","LABFORCE"), ("LABFORCE","SCHOOL")]
measurement_log = []
for c in queries:
    M_c = data.project(c).datavector()
    y_c = M_c + normal(loc=0, scale=50, size=M_c.size)
    measurement_log.append( (identity(M_c.size), y_c, 50, c) ) 
engine = FactoredInference(data.domain)
model = engine.estimate(measurements)
synth = model.synthetic_data()
\end{python}
%\captionsetup{width=0.95\textwidth}
\vspace{-3mm}\caption{\label{fig:pythonexample}A demonstration of how to generate synthetic data with \pgm using real \texttt{Python} code.  In this case, the selected marginals are \marginal{SEX,LABFORCE} and \marginal{LABFORCE,SCHOOL}.  In Lines 9-12, these marginals are measured with Gaussian noise to protect privacy.  In Lines 14-16, \pgm takes these noisy measurements as input, estimates a model, and generates synthetic data.  \revision{ The \pgm library provides a straightforward interface that allows users to quickly write end-to-end code to generate synthetic data; different statistics can be preserved by changing Line 6.}}
\end{figure}

In this section, we elaborate on the general template for a mechanism outlined in \cref{fig:template}.   Recall there are three high-level steps:
\begin{enumerate}
\item \textbf{Select.} Select a collection of marginals to measure.
\item \textbf{Measure.} Use the Gaussian mechanism to measure each marginal in the collection.
\item \textbf{Generate.} Use \pgm to estimate a data distribution from the noisy measurements and generate synthetic data that preserves the measured marginals well.
\end{enumerate}
In this section, we describe the latter two steps, which form the core of the mechanism. In the next section, we will describe the full mechanism \mech in detail, including the \emph{select} step and many other details relating specifically to the NIST contest and dataset. 
%The selection of marginals will depend in part on computational considerations related to \pgm that we discuss here. However, \mech will use provisional data, and not sensitive data, to select marginals, so the privacy considerations will depend only on the \emph{measure} step described here, and on a composition analysis over multiple measurements.  
\cref{fig:pythonexample} shows how simple and modular this framework for synthetic data generation is.  
\revision{The open source \pgm library\footnote{\url{https://github.com/ryan112358/private-pgm/}} provides a simple interface to the key routines so that an end-to-end synthetic data generation mechanism can be written with very little code, allowing the modeler to focus on tailoring the procedure to the workload and domain.}
%This code sample uses the open source \pgm library\footnote{\url{https://github.com/ryan112358/private-pgm/}}, and is an end-to-end synthetic data generation mechanism \revision{written in} 16 lines of \texttt{Python} code.  
\revision{Under the hood, \pgm has thousands of lines of code, but it exposes a simple interface that is easy to use.}
In this example, there are only two selected marginal queries: \marginal{SEX,LABFORCE} and \marginal{LABFORCE,SCHOOL}, but the code can be readily modified (Line 6) to accommodate other marginal queries.  
%  We hope that future research on differentially private synthetic data can utilize \pgm to simplify the mechanism design process.}  
In the rest of the section, we describe this \revision{general} approach in more detail, and give some insight into the steps described in this code snippet.  
\subsection{Measuring Marginals with the Gaussian Mechanism}
\cref{alg:measure} shows the method for measuring marginals. Given a collection of attribute subsets $\mathcal{C}$, it computes the marginal for each $C \in \mathcal{C}$ and adds i.i.d. Gaussian noise to preserve privacy.  It also accepts a weight $w_C$ for each attribute subset, which represents the relative importance of that marginal.  It collects all of these noisy measurements into a {\em measurement log}, which will be passed to \pgm for post-processing. The measurement log records, for each marginal defined by a subset of attributes, the noisy marginal query answers together with information about the weight assigned to the marginal and the magnitude of noise used to measure it. It is easy to verify the privacy properties of \cref{alg:measure}, as it is a direct application of the Gaussian mechanism on a sensitivity-$1$ quantity.\footnote{The weights are explicitly normalized so that the collection of marginals has sensitivity $1$.}

\begin{theorem} \label{thm:privacymeasure}
\cref{alg:measure} satisfies $(\alpha, \frac{\alpha}{2 \sigma^2})$-RDP for all $\alpha \geq 1$.
\end{theorem}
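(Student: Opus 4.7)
The plan is to view \cref{alg:measure} as a single invocation of the Gaussian mechanism applied to a stacked, weighted vector-valued query, and then read off the RDP parameters directly from \cref{prop:rdpgauss}. Concretely, I would define
$$ f(D) = \bigoplus_{C \in \mathcal{C}} w_C \cdot M_C(D), $$
i.e., the concatenation of the weighted marginal vectors for each $C \in \mathcal{C}$. The output of \cref{alg:measure}, up to trivial rescaling by $w_C$ that is part of the post-processing/bookkeeping in the measurement log, is $f(D) + \mathcal{N}(0,\sigma^2 \mathbf{I})$.

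The main technical point is to verify that $\Delta_f = 1$. For neighboring $D \sim D'$ differing in a single record $x$, the marginal $M_C$ changes by $\pm 1$ in exactly one entry (the cell indexed by $x_C$) and is unchanged elsewhere, so $\norm{M_C(D) - M_C(D')}_2 = 1$ for every $C$. The blocks of $f$ are disjoint, hence
$$ \norm{f(D) - f(D')}_2^2 \;=\; \sum_{C \in \mathcal{C}} w_C^2 \, \norm{M_C(D) - M_C(D')}_2^2 \;=\; \sum_{C \in \mathcal{C}} w_C^2. $$
By the footnoted weight-normalization convention, $\sum_C w_C^2 = 1$, so $\Delta_f = 1$.

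With $\Delta_f = 1$, \cref{prop:rdpgauss} immediately gives that the Gaussian mechanism with noise scale $\sigma$ is $(\alpha, \alpha/(2\sigma^2))$-RDP for all $\alpha \geq 1$. Since the rescaling/reshaping that produces the measurement log entries is deterministic post-processing of $f(D) + \mathcal{N}(0,\sigma^2 \mathbf{I})$, the RDP guarantee is preserved, establishing the claim.

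The only mild obstacle is being careful about the bookkeeping: the algorithm presents measurements as separate noisy marginals, one per $C$, so one must justify that combining them into a single stacked query is equivalent (in the privacy sense) to running the per-marginal Gaussian mechanisms independently with the same $\sigma$. This follows because i.i.d. Gaussian noise on the concatenated vector is distributionally identical to independent Gaussian noise on each block, and because the privacy analysis depends only on the joint output distribution. No composition theorem is actually needed; the whole algorithm is a single Gaussian mechanism on a sensitivity-$1$ query.
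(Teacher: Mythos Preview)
Your proposal is correct and matches the paper's approach: the paper does not give a formal proof environment for this theorem but simply remarks that \cref{alg:measure} ``is a direct application of the Gaussian mechanism on a sensitivity-$1$ quantity,'' with the footnote that the weight normalization enforces sensitivity $1$. You have faithfully filled in the details of exactly this argument --- stacking the weighted marginals into a single vector-valued query, computing $\Delta_f^2 = \sum_C w_C^2 = 1$, and invoking \cref{prop:rdpgauss}. (A minor note: no rescaling is actually needed in the bookkeeping step, since the log stores $w_C M_C(D) + \mathcal{N}(0,\sigma^2 I)$ directly, which is already the $C$-block of $f(D)+\mathcal{N}(0,\sigma^2\mathbf{I})$; the remaining entries of the tuple are public metadata.)
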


\begin{algorithm}[t]
\algsize{
\textbf{Input:} $D$ (sensitive dataset), $\mathcal{C}$ (a collection of attribute subsets), $w_C$ (weights for each $C \in \mathcal{C})$, $\sigma$ (noise scale) \\
\textbf{Output:} log (a list of noisy measurements together with metadata) \\
\nl Normalize weights, $w_C \leftarrow w_C / \sqrt{\sum_C w_C^2}$. \\
\nl \For{\emph{For $C \in \mathcal{C}$:}}{
\nl Calculate noisy marginal, $\tilde{\mu} = w_C M_C(D) + \mathcal{N}(0, \sigma^2 I)$ \\
\nl Append $4$-tuple $(w_C I,  \; \tilde{\mu}, \; \sigma,\; C)$ to measurement log \\
}}
\caption{ \label{alg:measure} Measure Marginals}
\end{algorithm}

\subsection{\pgm: Inference and Synthetic Data Generation}
\pgm is a general-purpose post-processing tool to infer a data distribution given noisy measurements~\cite{mckenna2019graphical}. It is compatible with measurements from a wide variety of mechanisms for discrete data, and can often improve utility at no cost to privacy. Because it infers a representation of a full data distribution, it produces query answers that are \emph{consistent} with one another, even if the noisy measurements are inconsistent. It uses a compact representation of the data distribution to avoid exponential complexity in many cases, though the size of the representation will depend on the measurements, as we describe below.

%It accomplishes this by resolving inconsistensies in noisy measurements in a principled manner.  It is similar in spirit to other approaches for resolving inconsistencies, but distinguises itself from them by being able to scale to high-dimensional domains.  

The high-level idea of \pgm is to solve an optimization problem to find a data distribution that would produce measurements close to the ones that were observed. It applies to cases when private measurements depend on the data through marginals. For example, suppose the measurements are of the form
$$y_C = Q_C M_C(D) + \xi$$
for all attributes sets $C$ in some collection $\mathcal{C}$, where $Q_C \in \mathbb{R}^{p_C \times n_C}$ is a linear transformation applied to the marginal prior to release and $\xi \in \mathbb{R}^{p_C}$ is zero-centered noise (e.g., Laplace or Gaussian) with known standard deviation. The measurements taken in \cref{alg:measure} represent the common case where $Q_C$ is just the identity matrix, so that we observe the noisy marginals directly. However, the ability to measure arbitrary linear transformations of marginals is a nice feature that is useful for some types of measurements that occur in practice.\footnote{In fact, $Q_C$ can be replaced with an arbitrary non-linear differentiable transformation, and \pgm will accept that as input as well.}  Examples of this include hierarchical measurements for answering range queries \cite{hay2010boosting,qardaji2013understanding,li2014data}, and optimized measurements for answering general linear query workloads \cite{li2010optimizing,mckenna2018optimizing}.   

Given these measurements, \pgm infers a data distribution $P$ that best explains the measurements by solving the optimization problem
\begin{equation}
  \label{eq:op}
\argmin_{P} \sum_{C \in \mathcal{C}} \norm{ Q_C M_C(P) - y_C}_2^2.
\end{equation}
The objective of this optimization problem is the negative log-likelihood of the noisy measurements under the Gaussian release mechanism, so \cref{eq:op} can be seen as a maximum likelihood estimator. We have abused notation by allowing $M_C$ to operate on a data distribution rather than a dataset; the correct interpretation is to substitute the probability vector $P$ for the contingency table representation $D$ of the dataset for computing the marginal. %For other mechanisms, we may freely replace this objective above by the negative log-likelihood of data released by the mechanism. For example, \cite{mckenna2019graphical} includes a complex example that computes the negative-log likelihood of all random outputs from the Dual Query mechanism \cite{dualquery} by tracing the mechanism. \ry{what does tracing the mechanism mean?}
An obvious issue with the optimization problem in \cref{eq:op} is that the dimension of the decision variable $P$ is equal to the domain size $n$, which is exponential in the number of attributes, so we cannot usually solve this problem directly. The key observation of \pgm is
\begin{fct}
\label{fact:op}
\cref{eq:op} has an optimum of the form $P_\theta$, where $P_\theta$ is a graphical model with one factor for each set $C \in \mathcal{C}$ of attributes for which the mechanism measured a marginal.
\end{fct}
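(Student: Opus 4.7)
The plan is to exploit the fact that the objective in \cref{eq:op} depends on $P$ only through the tuple of marginals $(M_C(P))_{C \in \mathcal{C}}$. Concretely, I would first introduce the marginal polytope $\mathcal{M} = \{(M_C(P))_{C \in \mathcal{C}} : P \text{ a distribution on } \Omega\}$ and rewrite the problem as the two-stage optimization
\begin{equation*}
\min_{\mu \in \mathcal{M}} \sum_{C \in \mathcal{C}} \norm{Q_C \mu_C - y_C}_2^2, \qquad \text{then choose any } P \text{ with } M_C(P) = \mu_C^\star \text{ for all } C \in \mathcal{C}.
\end{equation*}
Since the outer problem is a continuous optimization over the compact convex set $\mathcal{M}$, an optimum $\mu^\star$ exists. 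The claim then reduces to showing that among all distributions $P$ realizing the marginal tuple $\mu^\star$, one can choose a $P$ of graphical-model form whose factors are indexed by $\mathcal{C}$.

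For this lifting step, I would invoke the maximum-entropy principle. Among all distributions $P$ with $M_C(P) = \mu_C^\star$ for every $C \in \mathcal{C}$, pick the one of maximum Shannon entropy, call it $P_\theta$. Since the constraints are linear in $P$ and the entropy is strictly concave, Lagrangian duality gives a dual variable $\theta_C \in \mathbb{R}^{n_C}$ per marginal constraint, and the KKT conditions force the optimal primal to take the exponential-family form
\begin{equation*}
P_\theta(x) \;\propto\; \exp\Bigl(\sum_{C \in \mathcal{C}} \theta_C(x_C)\Bigr) \;=\; \prod_{C \in \mathcal{C}} \phi_C(x_C),
\end{equation*}
where $\phi_C(x_C) = \exp(\theta_C(x_C))$ is a factor depending only on the attributes in $C$. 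This is exactly a graphical model with one factor per $C \in \mathcal{C}$, and by construction $P_\theta$ achieves the objective value of $\mu^\star$, so it is an optimum of \cref{eq:op}.

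The step I would expect to be the main obstacle is the boundary case in which $\mu^\star$ lies on the relative boundary of $\mathcal{M}$, i.e., some cells of the target marginals are zero. In that regime the maximum-entropy problem has no finite Lagrange multipliers — the exponential-family parameterization would require $\theta_C(x_C) = -\infty$ on forbidden configurations — so the clean duality argument needs a small patch. I would handle this either by restricting the whole argument to the face of the probability simplex consistent with the zero pattern of $\mu^\star$ (so that the restricted marginal polytope has nonempty relative interior) and running max-entropy there, or by a limit/perturbation argument: solve the problem on $\mu^\star + \eta\mathbf{1}$ to land in the interior, obtain an exponential-family optimum $P_{\theta^{(\eta)}}$, and pass to the limit $\eta \to 0$, allowing factor values of $0$. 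Either route confirms \cref{fact:op} and, crucially, shows that the decision variable can always be reparameterized by the factor tables $\{\phi_C\}_{C \in \mathcal{C}}$ rather than the full probability vector on $\Omega$, which is what makes the subsequent optimization tractable.
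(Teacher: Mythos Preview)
The paper does not actually prove \cref{fact:op}; it is stated as a known result and attributed to the original \pgm paper \cite{mckenna2019graphical}. The only hint the present paper gives about the underlying argument is in Remark~2, where it notes that the graphical-model optimum $P_\theta$ has maximum entropy among all optima of \cref{eq:op}.

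Your proposal is correct and is precisely the maximum-entropy argument that underlies the result in \cite{mckenna2019graphical}. The two-stage decomposition (first optimize over the marginal polytope, then lift back via max-entropy to get the exponential-family form) is the standard route, and your handling of the boundary case---either by restricting to a face of the simplex or by a perturbation/closure argument---is exactly the technical wrinkle one has to address. So your approach is not merely consistent with the paper's treatment, it \emph{is} the argument the paper is implicitly invoking via the citation and Remark~2.
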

This allows us to solve the much lower-dimensional optimization problem
\begin{equation} \label{eq:op2}
\argmin_{\theta} \sum_{C \in \mathcal{C}} \norm{ Q_C M_C(P_\theta) - y_C}_2^2,
\end{equation}
with no loss in solution quality. The decision variable $\theta$ is the parameter vector of the graphical model, and has dimension equal to the total length of the set of measured marginals. A simple proximal algorithm is given in \cite{mckenna2019graphical} that solves this optimization problem using only repeated calls to a routine to perform marginal inference in a discrete graphical model --- i.e., computing $ M_C(P_{\theta})$ for all $C \in \mathcal{C}$ and various different $\theta$. The procedure is efficient whenever marginal inference in the graphical model is efficient.  Belief propagation is the standard way to perform marginal inference in practice, as it efficiently computes $M_C(P_{\theta})$ directly in terms of $\theta$ without ever explicitly materializing the full joint distribution $P_{\theta}$ \cite{koller2009probabilistic}.  

\begin{remark}[Scalability of \pgm]
\pgm is able to scale to very high-dimensional domains.  The main factors that influence it's scalability are (1) the total size of the parameter vector $\theta$ and (2) the structure of the set $\mathcal{C}$.  The size of $\theta$ is the same as the size of all of the relevant marginals combined, and that must not be too large.  The size of each marginal depends directly on $|\Omega_i|$, the number of possible values for each attribute.  Furthermore, the set $\mathcal{C}$ is important because it corresponds to the structure of the graphical model, and belief propagation is most efficient for tree-structured models.  The scalability of belief propagation and \pgm for non tree-structured models depends on a quantity known as the tree width, which is a measure of how ``tree-like'' the model is \cite{koller2009probabilistic}.
\end{remark}

\begin{remark}[Lack of modeling assumptions]%[Note about \cref{fact:op}]
It is easy to misconstrue the meaning of the graphical model representation. The inferred distribution $P_\theta$ will satisfy conditional independence properties dictated by the structure of the model. However, \emph{no approximation is made when solving the optimization problem in \cref{eq:op}}, and \emph{the independence properties do not arise from assumptions made by the modeler about the structure of the data distribution}. By \cref{fact:op}, there is an optimum to \cref{eq:op} that is a graphical model, and hence satisfies these conditional independence properties. Moreover, the graphical model solution $P_\theta$ can be shown to have maximum entropy among all optima of \cref{eq:op} \cite{mckenna2019graphical}.
\end{remark}

Once $P_{\theta}$ is estimated, \pgm can be used for multiple purposes: reducing error on measured marginals, estimating unmeasured marginals, and even generating synthetic tabular data. These use cases are explained in greater detail below:

\paragraph*{\textbf{Reducing error on measured marginals}}

First, \pgm improves utility by combining all sources of measured information into a single cohesive estimate for the data distribution.  When the measurements are inconsistent with each other, \pgm resolves these inconsistencies in a principled manner, reducing variance and boosting utility.  One achieves this by using $\bar{y}_C = Q_C M_C(P_{\theta})$ in place of the noisy observation $y_C$.  The estimated marginal $\bar{y}_C$ will typically have smaller variance than $y_C$ and will often have lower overall error as well, therefore offering immediate utility improvements at no cost to privacy. \cref{example:boost} demonstrates this idea more concretely in a toy setting.

\begin{example}[Boosting utility on measured marginals] \label{example:boost}
%This example shows how \pgm can boost utility on a toy problem.  
We draw $1000$ tuples from the actual contest dataset and measure two of their marginals, \marginal{SEX,LABFORCE} and \marginal{LABFORCE,SCHOOL}, using the Gaussian mechanism with $\sigma = 50$.  Tables (a--c) below show the true marginals, the noisy marginals, and the marginals estimated by $\pgm$.  One can easily verify that the noisy marginals are not consistent: the \marginal{SEX,LABFORCE} marginal implies the total number of people with \attr{LABFORCE=N} is $ 124.549+318.029=442.578$, while the \marginal{LABFORCE,SCHOOL} marginal implies the same that number is $287.215+171.134=458.349$.  These are two different estimates for the same quantity, which is a consistency problem.  In contrast, the \pgm estimated marginals are consistent: they both agree that the total number is $436.873$.  Additionally, \pgm better estimates the true marginals than the noisy marginals do: the $L_1$ distances are about $213$ and $272$ for \pgm, while they are about $251$ and $295$ for the noisy marginals, which is a significant boost in utility.
\begin{table*}[h!]
\hspace{-2em}
\captionsetup[subtable]{position = below}
\captionsetup[table]{position=top}
\begin{adjustbox}{width=0.91\textwidth}
\begin{subtable}{0.3\textwidth}
\begin{tabular}{cc|c}
\hline
\attr{SEX} & \attr{LABFORCE} & \multicolumn{1}{c}{\emph{count}} \\\hline
M & --- & $156$ \\
M & N & $65$ \\
M & Y & $316$ \\
F & --- & $158$ \\
F & N & $282$ \\
F & Y & $23$ \\\hline\hline
\attr{LABFORCE} & \attr{SCHOOL} & \multicolumn{1}{c}{\emph{count}} \\\hline
--- & N & $159$ \\
--- & Y & $155$ \\
N & N & $288$ \\
N & Y & $59$ \\
Y & N & $336$ \\
Y & Y & $3$ \\\hline
\end{tabular}
\caption{True marginals}
\end{subtable}
\hspace{3em}
\begin{subtable}{0.3\textwidth}
\begin{tabular}{cc|c}
\hline
\attr{SEX} & \attr{LABFORCE} & \multicolumn{1}{c}{\emph{count}} \\\hline
M & --- & $132.428$ \\
\rowcolor{blue!10} M & N & $124.549$ \\
M & Y & $244.365$ \\
F & --- & $173.633$ \\
\rowcolor{blue!10} F & N & $318.029$ \\
F & Y & $-21.358$ \\\hline\hline
\attr{LABFORCE} & \attr{SCHOOL} & \multicolumn{1}{c}{\emph{count}} \\\hline
--- & N & $116.021$ \\
--- & Y & $186.826$ \\
\rowcolor{blue!10} N & N & $287.215$ \\
\rowcolor{blue!10} N & Y & $171.134$ \\
Y & N & $278.498$ \\
Y & Y & $-46.497$ \\\hline
\end{tabular}
\caption{Noisy marginals}
\end{subtable}
\hspace{4em}
\begin{subtable}{0.3\textwidth}
\begin{tabular}{cc|c}
\hline
\attr{SEX} & \attr{LABFORCE} & \multicolumn{1}{c}{\emph{count}} \\\hline
M & --- & $124.829$ \\
\rowcolor{blue!10} M & N & $121.696$ \\
M & Y & $254.636$ \\
F & --- & $166.034$ \\
\rowcolor{blue!10} F & N & $315.177$ \\
F & Y & $0$ \\\hline\hline
\attr{LABFORCE} & \attr{SCHOOL} & \multicolumn{1}{c}{\emph{count}} \\\hline
--- & N & $110.029$ \\
--- & Y & $180.834$ \\
\rowcolor{blue!10} N & N & $276.477$ \\
\rowcolor{blue!10} N & Y & $160.396$ \\
Y & N & $254.636$ \\
Y & Y & $0$ \\\hline
\end{tabular}
\caption{\pgm marginals}
\end{subtable}
\end{adjustbox}
%\caption{ \label{table:boost} True and estimated 2-way marginals   }
%\caption{Table associated with \cref{example:boost}.  Rows highlighted in blue can be added up to estimate the number of people with \attr{LABFORCE=N}.}
\end{table*}
\end{example}
\vspace{-1em}
\paragraph*{\textbf{Estimating unmeasured marginals}}

Second, \pgm can be used to answer new queries that were never measured directly by using $P_{\theta}$ in place of the true data $D$.  This allows us to estimate new marginals without spending the privacy budget, saving a precious resource. \cref{example:estimating} demonstrates this idea in a toy setting.

\begin{example}[Estimating new marginals] \label{example:estimating}
Building on \cref{example:boost}, recall that we measured the marginals on \marginal{SEX,LABFORCE} and \marginal{LABFORCE,SCHOOL}. We can use \pgm to estimate the marginal on \marginal{SEX,SCHOOL}, even though we never measured it and it can not be directly inferred from the other marginals that were measured.  As shown below in Table (b), the provided estimate is reasonable, given that we never measured it, and we added significant noise to the marginals we did measure.  We reiterate that we obtained this estimate ``for free'', without spending additional privacy budget.  Additionally, \pgm can estimate the marginal on \marginal{SEX,LABFORCE,SCHOOL}, which is shown in Table (d).  This is pretty close to the true 3-way marginal, shown in Table (c).  In fact, the normalized $L_1$ error is only $0.135$.  While there may be other equally good estimates for the 3-way marginal (according to the loss function in \cref{eq:op}), the estimate provided by \pgm has maximum entropy among all of them.  In this case, \pgm was fairly accurate because \attr{SEX} and \attr{SCHOOL} are (approximately) conditionally independent given \attr{LABFORCE} in the true data.
\begin{table*}[h]
\hspace{-2em}
\captionsetup[subtable]{position = below}
\captionsetup[table]{position=top}
\begin{adjustbox}{width=0.91\textwidth}
\begin{subtable}{0.3\linewidth}
\begin{tabular}{cc|r}
\hline
\attr{SEX} & \attr{SCHOOL} & \multicolumn{1}{c}{\emph{count}} \\\hline
M & N & $\quad423$ \\
M & Y & $114$ \\
F & N & $360$ \\
F & Y & $103$ \\
\end{tabular}
\caption{True 2-way marginal\quad\quad\quad}
\vspace{2em}
\begin{tabular}{cc|r}
\hline
\attr{SEX} & \attr{SCHOOL} & \multicolumn{1}{c}{\emph{count}} \\\hline
M & N & $378.873$ \\
M & Y & $122.289$ \\
F & N & $262.269$ \\
F & Y & $218.942$ \\
\end{tabular}
%\vspace{1em}
\caption{Estimated 2-way marginal}
\end{subtable}
\begin{subtable}{0.3\linewidth}
\begin{tabular}{ccc|r}
\hline
\texttt{SEX} & \texttt{LABFORCE} & \attr{SCHOOL} & \multicolumn{1}{c}{\emph{count}} \\\hline
M & --- & N & $74$ \\
M & --- & Y & $82$ \\
M & N   & N & $36$ \\
M & N   & Y & $29$ \\
M & Y   & N & $313$ \\
M & Y   & Y & $3$ \\
F & --- & N & $85$ \\
F & --- & Y & $73$ \\
F & N   & N & $252$ \\
F & N   & Y & $30$ \\
F & Y   & N & $23$ \\
F & Y   & Y & $0$ \\
\multicolumn{4}{c}{}
\end{tabular}
\vspace{-1em}
\caption{True 3-way marginal}
\end{subtable}
\hspace{4em}
\begin{subtable}{0.3\linewidth}
\begin{tabular}{ccc|r}
\hline
\texttt{SEX} & \texttt{LABFORCE} & \attr{SCHOOL} & \multicolumn{1}{c}{\emph{count}} \\\hline
M & --- & N & $47.221$ \\
M & --- & Y & $77.608$ \\
M & N   & N & $77.016$ \\
M & N   & Y & $44.68$ \\
M & Y   & N & $254.636$ \\
M & Y   & Y & $0.000$ \\
F & --- & N & $62.808$ \\
F & --- & Y & $103.226$ \\
F & N   & N & $199.461$ \\
F & N   & Y & $115.716$ \\
F & Y   & N & $0.000$ \\
F & Y   & Y & $0.000$ \\
\multicolumn{4}{c}{}
\end{tabular}
\vspace{-1em}
\caption{Estimated 3-way marginal}
\end{subtable}
\end{adjustbox}
%\caption{Table associated with \cref{example:estimating}.  Shows the true marginals along with the estimates provided by \pgm.}
\end{table*}
\end{example}
\vspace{-1.5em}
\paragraph*{\textbf{Generating synthetic data}}

Third, \pgm can be used to generate synthetic data $\bar{D}$ in tabular format.  $\bar{D}$ can be used in place of $P_{\theta}$ and will generally give similar results.  They will not give exactly the same results because $\bar{D}$ has integer-valued marginals while $P_{\theta}$ has real-valued marginals, so some additional rounding error is unavoidable.  One can obtain the synthetic data in multiple ways; a simple and natural approach would be to sample records from $P_{\theta}$ to form a synthetic dataset.  This naive approach would introduce sampling error which is undesirable.  %The contest implementation of 
\pgm uses an alternative approach to reduce error from additional sources of randomness. The details of this procedure are available in the open-source implementation of \pgm, and are summarized in the supplementary material.  In \cref{example:sample}, we give an intuitive idea of how this procedure works, and illustrate why it is preferable to the sampling approach. 
 
\begin{example}[Generating synthetic data] \label{example:sample}
Building on \cref{example:boost,example:estimating}, we calculate the \attr{LABFORCE} marginal from the \pgm model in Table (a) below, which has fractional counts.   We also use \pgm to generate synthetic data and show the same marginal in Table (b), which has integer counts.  These two marginals almost exactly match, because \pgm tries to preserve the model marginals as closely as possible when generating synthetic data.  However, synthetic data obtained by i.i.d sampling will not match the model marginals as closely due to the randomness in sampling, as shown in Table (c).  
\begin{table*}[h]
\hspace{-2em}
\captionsetup[subtable]{position = below}
\captionsetup[table]{position=top}
\begin{adjustbox}{width=0.91\textwidth}
\begin{subtable}{0.3\linewidth}
\begin{tabular}{c|c}
\hline
\attr{LABFORCE} & \multicolumn{1}{c}{\emph{count}} \\\hline
--- & $290.863$ \\
N & $436.873$ \\
Y & $254.636$ \\
\end{tabular}
\caption{\pgm \\ model marginal}
\end{subtable}
\begin{subtable}{0.3\linewidth}
\begin{tabular}{c|c}
\hline
\attr{LABFORCE} & \multicolumn{1}{c}{\emph{count}} \\\hline
--- & $291$ \\
N & $437$ \\
Y & $254$ \\
\end{tabular}
\caption{\pgm \\ synthetic data marginal}
\end{subtable}
\begin{subtable}{0.3\linewidth}
\begin{tabular}{c|c}
\hline
\attr{LABFORCE} & \multicolumn{1}{c}{\emph{count}} \\\hline
--- & $262$ \\
N & $468$ \\
Y & $252$ \\
\end{tabular}
\caption{Sampled \\ synthetic data marginal}
\end{subtable}
\end{adjustbox}
%\caption{Table associated with \cref{example:sample}.}
\end{table*}
\end{example}

\section{Algorithm Description} \label{sec:mech}

\begin{algorithm}[t]
\algsize{
\begin{tabularx}{\textwidth}{llr}
\nl \textbf{Calibrate Noise} & Derive noise scale $\sigma$ from target privacy parameters $(\epsilon, \delta)$ & \cref{eq:calibrate} \\
\nl \textbf{Encode Domain} &  Use public information to encode attribute domains &  \\
\nl \textbf{Transform Data} & Transform data using insights from provisional data & \cref{alg:preprocess} \\
\nl \textbf{Compress Domain} & Use data to reduce domain: & \\
\quad \textbf{Measure} & \quad Measure all one-way marginals & \cref{alg:measure} \\
\quad \textbf{Compress} & \quad Remove domain elements failing threshold test & \cref{alg:compress} \\
\nl \textbf{Select Marginals} & Select a subset of 2- and 3-way marginals & \cref{alg:select} \\
\nl \textbf{Measure Marginals} & Measure selected marginals  & \cref{alg:measure} \\
\nl \textbf{Synthesize data} & Synthesize records using \pgm:  &  \\
\quad \textbf{Estimate} & \quad Estimate distribution from Step 4 and 6 measurements\xspace\xspace\xspace & \cref{eq:op2} \\
\quad \textbf{Generate} & \quad Generate synthetic records  & \cref{alg:synthdata} \\
\nl \textbf{Reverse} & Reverse the transformation made in Step 3 & \cref{alg:reverse} \\
\end{tabularx}
\caption{\label{alg:mech}\mech}}
\end{algorithm}
%\footnotetext{In the implementation of this algorithm used in the contest, $\sigma$ is calibrated with a numerical technique called the moments accountant \cite{abadi2016deep}.  This gives basically the same values of $\sigma$ shown in the formula.\gm{remove this footnote and add to a paragraph dedicated to the calibrate step (it can be short).  Also, moments accountant is never defined or cited. }}

In this section we describe \mech, which takes the basic mechanism template outlined in the previous section, and applies it to the setting of the NIST competition.  
\mech simply invokes the Gaussian mechanism to measure a carefully chosen subset of 1, 2, and 3-way marginals.  Then the resulting noisy measurements are post-processed using \pgm to obtain synthetic data that is most consistent with those marginals.  
\mech does not follow the template from the previous section exactly, as there are two rounds of measurements, and an additional domain compression step developed specifically to deal with some of the challenges around the dataset used in the competition.  The high-level steps of \mech are stated in \cref{alg:mech}, and a detailed description of each step will be provided in this section, with motivations and intuitions for the various design choices.

\paragraph*{\textbf{Step 1: Calibrate Noise}}

In this step, $\sigma$ is calibrated to ensure the whole algorithm satisfies $(\epsilon, \delta)$-differential privacy.  Note that only steps (4) and (6) in \cref{alg:mech} use the sensitive data, and these are both invocations of \cref{alg:measure}, which is  $ (\alpha, \frac{\alpha}{2 \sigma^2})$-RDP (\cref{thm:privacymeasure}).  The data transformations made in steps (3) and (4) do not affect the privacy analysis of \cref{alg:measure} since one individual can still only affect each marginal by at most one.  Hence \mech is $(\alpha, \frac{\alpha}{\sigma^2})$-RDP by two-fold adaptive composition (\cref{prop:composition}).  Moreover, by \cref{prop:rdpdp}, \mech is $\big(\frac{\alpha}{\sigma^2} + \frac{\log{(1/\delta)}}{\alpha-1}, \delta\big)$-DP for all $\alpha \geq 1$.  

For a fixed $\alpha$, it is easy to determine $\sigma$ by solving the equation $ \frac{\alpha}{\sigma^2} + \frac{\log{(1 / \delta)}}{\alpha-1} = \epsilon$ for $\sigma$.   The best value of $\sigma$ can be obtained by minimizing over all $\alpha$.  In the contest, this computation was done by invoking the \texttt{moments accountant} \cite{abadi2016deep}, which minimizes over $\alpha = 1, \dots, 512$.  However, this minimization can actually be done in closed form \cite{zhao2019reviewing}, leading to  the following equation for $\sigma$:
\begin{equation} \label{eq:calibrate}
\sigma = \frac{\sqrt{\log{(1 / \delta)}} + \sqrt{\log{(1 / \delta)} + \epsilon}}{\epsilon}
\end{equation}

\begin{theorem}[Privacy of \mech] \label{thm:privacy}
\cref{alg:mech} is $(\epsilon, \delta)$-differentially private.
\end{theorem}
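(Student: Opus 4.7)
My plan is to prove the theorem by the same accounting strategy sketched in the paragraph preceding the statement: isolate the steps that touch the sensitive data, bound their RDP cost via \cref{thm:privacymeasure}, compose them with \cref{prop:composition}, convert to $(\epsilon,\delta)$-DP via \cref{prop:rdpdp}, and finally verify that the closed-form $\sigma$ in \cref{eq:calibrate} minimizes the resulting bound to exactly $\epsilon$.

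First I would audit each of the eight steps in \cref{alg:mech}. Steps (1), (2), and (5) depend only on the public competition parameters, the public schema information, and (for Step 5) on the output of a previous differentially private step, so none accesses the sensitive dataset directly. Steps (7)--(8) are post-processing: Step (7) operates on the measurement log, and Step (8) reverses the transformation applied in Step (3) on the synthetic output. Step (3) is a record-wise transformation, so it maps neighboring datasets to neighboring datasets and does not itself consume privacy budget. This leaves precisely the two invocations of \cref{alg:measure} in Steps (4) and (6) as the only sources of privacy cost.

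Next, by \cref{thm:privacymeasure}, each of these two invocations is $(\alpha, \tfrac{\alpha}{2\sigma^2})$-RDP for every $\alpha \geq 1$. Since Step (5) chooses the marginals measured in Step (6) using the output of Step (4), I need the adaptive form of composition, which is exactly what \cref{prop:composition} provides. Composing the two calls therefore yields $(\alpha, \tfrac{\alpha}{\sigma^2})$-RDP for the entire mechanism, for every $\alpha \geq 1$. The subsequent post-processing in Steps (7)--(8) does not change this guarantee.

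Finally, \cref{prop:rdpdp} converts this to
\[
\Bigl(\tfrac{\alpha}{\sigma^2} + \tfrac{\log(1/\delta)}{\alpha-1},\ \delta\Bigr)\text{-DP}
\]
for every $\alpha > 1$. It remains to show that $\sigma$ as defined in \cref{eq:calibrate} guarantees that the minimum over $\alpha$ of the first coordinate is at most $\epsilon$. I would solve this one-dimensional minimization in closed form: differentiating with respect to $\alpha$ gives the optimum $\alpha^\ast = 1 + \sigma\sqrt{\log(1/\delta)}$, and substituting yields the minimum value $\tfrac{1}{\sigma^2} + \tfrac{2\sqrt{\log(1/\delta)}}{\sigma}$. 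Setting this equal to $\epsilon$, letting $u = 1/\sigma$ and applying the quadratic formula to $u^2 + 2\sqrt{\log(1/\delta)}\,u - \epsilon = 0$, then rationalizing, recovers \cref{eq:calibrate} exactly. The main obstacle I anticipate is nothing deep but purely bookkeeping: making sure Step (3) is genuinely record-wise (so it preserves the neighboring relation and commutes with the privacy analysis), and carefully handling the adaptivity between Steps (4), (5), and (6) when invoking composition; the scalar optimization that ties the bound to $\epsilon$ is routine calculus.
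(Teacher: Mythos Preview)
Your proposal is correct and follows essentially the same approach as the paper's proof: identify Steps (4) and (6) as the only privacy-consuming steps, invoke \cref{thm:privacymeasure} for each, compose via \cref{prop:composition} to get $(\alpha,\alpha/\sigma^2)$-RDP, convert via \cref{prop:rdpdp}, and then check that the chosen $\sigma$ makes the bound equal $\epsilon$ at $\alpha^\ast = 1+\sigma\sqrt{\log(1/\delta)}$. The paper's proof is terser---it simply plugs in $\alpha^\ast$ and $\sigma$ and defers the algebraic verification to \texttt{sympy}---whereas you derive $\alpha^\ast$ by calculus and recover \cref{eq:calibrate} via the quadratic formula, but the substance is identical. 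One small inaccuracy: Step (5) in \mech relies on the \emph{public provisional dataset}, not on the output of Step (4); your conclusion that it does not touch the sensitive data is still correct, so this does not affect the argument.
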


\begin{proof}
From the analysis above, we know that \mech is $\Big(\frac{\alpha}{\sigma^2} + \frac{\log{(1/\delta)}}{\alpha-1}, \delta\Big)$-DP for all $\alpha \geq 1$.  By plugging in $ \alpha = 1 + \sigma \sqrt{\log{(1/\delta)}}$ and $\sigma = \frac{\sqrt{\log{(1 / \delta)}} + \sqrt{\log{(1 / \delta)} + \epsilon}}{\epsilon}$ and simplifying, we see that $\frac{\alpha}{\sigma^2} + \frac{\log{(1/\delta)}}{\alpha-1} = \epsilon$, and hence \mech is $(\epsilon, \delta)$-DP as desired.   The algebraic manipulation is routine but messy; it can easily be verified with \texttt{sympy} (see \cref{fig:proof-privacy} in the supplement).
\end{proof}

\begin{remark}[Noise Calibration]
It is well known that calibrating $\sigma$ via an RDP analysis does not give the smallest possible value required to achieve $(\epsilon, \delta)$-DP, and an analytic calibration gives strictly better results \cite{balle2018improving}, at least for a \emph{single} invocation of the Gaussian mechanism.  However, at the time of the competition, adaptive composition of \emph{two} Gaussian mechanisms was needed, and RDP was chosen because of its clean and well-understood guarantees.  If using the analytic Gaussian mechanism, advanced composition would be necessary to reason about the privacy of two-fold adaptive composition \cite{Dwork14Algorithmic}.  Since these are somewhat loose bounds, the benefit of the analytic calibration would be lost.  However, since the time of the competition, much progress has been made on understanding the behavior of the Gaussian mechanism under composition and it is now known that the analytic Gaussian mechanism can be used to calibrate noise for multiple (adaptive) invocations of the Gaussian mechanism \cite{dong2019gaussian,sommer2019privacy}.  This would give a smaller value of $\sigma$ than the one shown in \cref{eq:calibrate}, typically offering an improvement of 10 to 20 percent.    
\end{remark}

\paragraph*{\textbf{Step 2: Encode Domain}}

\begin{table}[t!]
\resizebox{\columnwidth}{!}{%
\begin{tabular}{ll|ll|ll|ll|ll}
\midrule
1. \attr{SPLIT} & 2/2 & 21. \attr{NCHLT5} & 7/7 & 41. \attr{SIZEPL} & 31/19 & 61. \attr{SUPDIST} & 631/631 & 81. \attr{ENUMDIST} & 3021/3021 \\
2. \attr{SLREC} & 3/2 & 22. \attr{RACE} & 7/7 & 42. \attr{EMPSTATD} & 35/15 & 62. \attr{METAREA} & 657/334 & 82. \attr{CITYPOP} & 3225/3225 \\
3. \attr{SEX} & 3/2 & 23. \attr{WKSWORK2} & 7/7 & 43. \attr{WKSWORK1} & 53/53 & 63. \attr{PRESGL} & 816/816 & 83. \attr{URBPOP} & 3225/3225 \\
4. \attr{SCHOOL} & 3/2 & 24. \attr{VET1940} & 9/4 & 44. \attr{SEA} & 54/54 & 64. \attr{BPL} & 901/163 & 84. \attr{METAREAD} & 6561/378 \\
5. \attr{URBAN} & 3/3 & 25. \attr{UCLASSWK} & 9/8 & 45. \attr{OCCSCORE} & 81/81 & 65. \attr{MBPL} & 901/164 & 85. \attr{MTONGUED} & 9602/489 \\
6. \attr{FARM} & 3/3 & 26. \attr{VETPER} & 9/8 & 46. \attr{AGEMARR} & 90/89 & 66. \attr{FBPL} & 901/165 & 86. \attr{MIGMET5} & 10000/379 \\
7. \attr{OWNERSHP} & 3/3 & 27. \attr{HISPRULE} & 9/9 & 47. \attr{MIGRATE5D} & 91/15 & 67. \attr{IND1950} & 998/162 & 87. \attr{MIGCOUNTY} & 10000/385 \\
8. \attr{RESPONDT} & 3/3 & 28. \attr{HRSWORK2} & 9/9 & 48. \attr{MTONGUE} & 97/92 & 68. \attr{MIGSEA5} & 998/510 & 88. \attr{ERSCOR50} & 10000/1002 \\
9. \attr{SPANNAME} & 3/3 & 29. \attr{CLASSWKR} & 10/4 & 49. \attr{SEI} & 97/97 & 69. \attr{OCC} & 999/231 & 89. \attr{EDSCOR50} & 10000/1002 \\
10. \attr{LABFORCE} & 3/3 & 30. \attr{INCNONWG} & 10/4 & 50. \attr{CLASSWKRD} & 99/18 & 70. \attr{EDUCD} & 1000/44 & 90. \attr{NPBOSS50} & 10000/1002 \\
11. \attr{VETWWI} & 3/3 & 31. \attr{SAMEPLAC} & 10/4 & 51. \attr{HRSWORK1} & 99/99 & 71. \attr{HIGRADED} & 1000/69 & 91. \attr{CITY} & 10000/1164 \\
12. \attr{SSENROLL} & 3/3 & 32. \attr{VETSTAT} & 10/4 & 52. \attr{VETSTATD} & 100/10 & 72. \attr{GQTYPED} & 1000/92 & 92. \attr{MIGCITY5} & 10000/1164 \\
13. \attr{METRO} & 4/4 & 33. \attr{VETCHILD} & 10/5 & 53. \attr{GQFUNDS} & 100/13 & 73. \attr{IND} & 1000/136 & 93. \attr{RENT} & 10000/10000 \\
14. \attr{EMPSTAT} & 4/4 & 34. \attr{MIGTYPE5} & 10/6 & 54. \attr{EDUC} & 100/13 & 74. \attr{UIND} & 1000/136 & 94. \attr{MBPLD} & 90021/537 \\
15. \attr{HISPAN} & 5/5 & 35. \attr{SAMESEA5} & 10/6 & 55. \attr{AGEMONTH} & 100/15 & 75. \attr{MIGPLAC5} & 1000/199 & 95. \attr{FBPLD} & 90021/539 \\
16. \attr{CITIZEN} & 5/5 & 36. \attr{MIGRATE5} & 10/7 & 56. \attr{HIGRADE} & 100/25 & 76. \attr{UOCC} & 1000/231 & 96. \attr{BPLD} & 90022/536 \\
17. \attr{WARD} & 6/6 & 37. \attr{GQTYPE} & 10/10 & 57. \attr{CHBORN} & 100/62 & 77. \attr{UOCC95} & 1000/279 & 97. \attr{VALUEH} & 10000000/5003 \\
18. \attr{NATIVITY} & 6/6 & 38. \attr{MARRNO} & 10/10 & 58. \attr{AGE} & 109/109 & 78. \attr{OCC1950} & 1000/283 & 98. \INCWAGEA & ---/52 \\
19. \attr{MARST} & 7/6 & 39. \attr{OWNERSHPD} & 21/8 & 59. \attr{HISPAND} & 481/55 & 79. \attr{DURUNEMP} & 1000/1000 & 99. \INCWAGEB & ---/8 \\
20. \attr{GQ} & 7/7 & 40. \attr{FAMSIZE} & 22/22 & 60. \attr{RACED} & 621/238 & 80. \attr{COUNTY} & 1251/385 & \quad \:\: \INCWAGE & 10000000/--- \\
\bottomrule
\end{tabular}%
}
\caption{ \label{table:domain} Domain information for the census dataset used in the third round of the competition.  Table specifies attribute names, and number of possible values for that attribute according to (1) the provided specs file and (2) the specs file combined with IPUMS documentation.}
\end{table}

Before running \mech, it is necessary to know the data domain $\Omega$.  The supplied competitor pack came with a ``SPECS'' file that contained some domain information.  Specifically, it supplied a single positive integer $n_i$ for each attribute $i$, and the domain for that attribute was assumed to be $\dom^{\text{SPECS}}_i = \set{0, \dots, n_i - 1}$.  However, because the data is derived from a census source, the domain is very thoroughly documented on the Integrated Public Use Microdata Series (IPUMS) website \cite{ipums}.  IPUMS offers a much finer grained view of the data domain, specifying the exact set of possible values for most attributes.  We use $\dom^{\text{IPUMS}}_i$  to denote the domain of possible values for attribute $i$ according to IPUMS.  \cref{example:ipums} demonstrates the benefit of using the finer grained IPUMS domain information. 

\begin{example}[SPECS vs. IPUMS] \label{example:ipums}
From the specs file and the IPUMS website, we see the domain for the \attr{EDUC} attribute is $\domsp_i = \set{0, 1, \dots, 99}$ and $\domip_i = \set{0, 1, \dots, 11, 99}$.  Note that $99$ is a special code that typically corresponds to missing data.  While both sources agree that $99$ is the largest possible value, the IPUMS documentation suggests that values in the range $12, \dots, 98$ are not possible.  Using the finer granularity domain from IPUMS reduces the number of possible values for \attr{EDUC} from $100$ to $13$.  This has two important ramifications.  First, it will make \pgm more efficient in later steps, since the scalability of that tool depends directly on the domain sizes of the attributes.  Second, it will prevent \mech from inadvertently introducing out-of-domain tuples to the synthetic data which could otherwise occur by adding positive noise to zero counts. 
\end{example}

Often $\domip_i$ is a subset of $\domsp_i$, although this is not always the case.  In some cases, IPUMS documents a certain value as being possible that never appeared in the provisional dataset or the supplied specs file.  To account for this \mech uses the intersection of the two domains, i.e., $\dom_i = \domsp_i \cap \domip_i$.  \cref{table:domain} enumerates the attributes in the dataset along with the domain size provided in the specs file, and the compressed domain size derived by \mech.

\paragraph*{\textbf{Step 3: Transform Data}}

\begin{algorithm}[t]
\algsize{
\textbf{Input:} $D$ (sensitive dataset) \\
\textbf{Output:} $D$ (transformed sensitive dataset) \\
\nl Replace \attr{VALUEH} attribute in $D$ using transformation: \\
\nl Split \INCWAGE attribute in $D$ into two attributes \INCWAGEA and \INCWAGEB using transformation:\footnotemark \\
\vspace{-1.1em}
\hspace{-3em}
\scriptsize{
\begin{minipage}{0.5\textwidth}
\begin{align*} 
&\attr{VALUEH} = \\
&\quad \begin{cases}
\floor{\frac{\attr{VALUEH}}{5}} & \attr{VALUEH} \leq \num{25000} \\
5001 & \attr{VALUEH} = \num{9999998} \\
5002 & \attr{VALUEH} = \num{9999999} \\
5000 & \text{otherwise}
\end{cases} \\
&\INCWAGEA =  \\
&\quad \begin{cases}
\floor{\frac{\INCWAGE}{100}} & \INCWAGE \leq 5000 \\
50 & \INCWAGE < \num{9999998} \\
51 & \INCWAGE = \num{9999998} \\
\end{cases} \end{align*}
\end{minipage}%
\begin{minipage}{0.5\textwidth}
\begin{align*} &\INCWAGEB = \\ & \quad\begin{cases}
0 & \INCWAGE \equiv 0 \mod 100 \\
1 & \INCWAGE \equiv 0 \mod 20 \\
2 & \INCWAGE \equiv 0 \mod 50 \\
3 & \INCWAGE \equiv 0 \mod 25 \\
4 & \INCWAGE \equiv 0 \mod 10 \\
5 & \INCWAGE \equiv 0 \mod 5 \\
6 & \INCWAGE \equiv 0 \mod 2 \\
7 & \text{otherwise} \\
\end{cases} \end{align*}
\end{minipage}}}
\caption{\label{alg:preprocess}Transform data}
\end{algorithm}
\footnotetext{Each condition in the piecewise definition of \INCWAGEA and \INCWAGEB should be interpreted as an ``else if'' statement rather than an ``if'' statement, as clearly multiple conditions can be true at the same time}

In addition to the general domain encoding outlined above, \mech gave special attention to two of the attributes with the largest domain: \attr{INCWAGE} and \attr{VALUEH}.  Both of these attributes started out with $10$ million possible values, and the IPUMS documentation provided limited information on these attributes.  Therefore, \mech leveraged the provisional dataset to try to identify a domain that captured all or most of the observed values for these attributes.  \cref{alg:preprocess} shows how these attributes are transformed to reduce the domain size.  
The intuition behind this pre-processing procedure is to compress the domain, while ensuring the compressed domain still covers all or most of the values observed in the provisional dataset.  For example, other than the special codes of \num{9999998} and \num{9999999}, $99.2\%$ of records have \attr{VALUEH} that is a multiple of $5$ and less than or equal to $\num{25000}$.\footnote{\num{9999998} and \num{9999999} are special codes corresponding to missing values of N/A values.}  This allows us to compress the domain of \attr{VALUEH} to a much more manageable size of $5003$ while still covering about $99.7\%$ of the observed values.  

\mech uses a similar approach to handle \attr{INCWAGE}.  Over $99.95\%$ of records in the provisional dataset had an \attr{INCWAGE} value of either \num{9999998} or something in the range $[0,5000]$.  For that reason, it is reasonably safe to truncate values above $5000$ without introducing too much bias. 
This transformation reduces the domain size of \INCWAGE to $5002$, but \mech takes things one step further.  
%\cref{fig:incwage} shows the histogram of \attr{INCWAGE} on the provisional dataset (for values in the range $[1,1000]$).  
There are clear periodic patterns in the \INCWAGE marginal, as the most common values are all multiples of $100$.  Multiples of $20$, $50$, and $25$ are also common.  To exploit this observation, \mech splits up \attr{INCWAGE} into two attributes: \INCWAGEA and \INCWAGEB, and never measures \INCWAGE directly, but only indirectly through these two derived attributes.  
\INCWAGEA is meant to capture the coarse-grained income by discretizing it into width $100$ bins, whereas \INCWAGEB is meant to capture the periodicity in the last two digits.  These two derived attributes have smaller domains of size $52$ and $8$, respectively. The exact formulas are given in \cref{alg:preprocess}.

%\begin{example}[\INCWAGE transformation]
%Let $\INCWAGE = 600$ and apply the formulas above to see that $\INCWAGEA = 6$ and $\INCWAGEB = 0$.  Going in the reverse direction, if $\INCWAGEA = 4$ and $\INCWAGEB = 1$, then the possible values for $\INCWAGE$ would be $\set{420, 440, 460, 480}$.  While this transformation is not always invertible, it is invertible in situations where $\INCWAGE_B = 0$ or $\INCWAGE_B = 2$, as we can uniquely determine \INCWAGE in these cases.  While some information is lost to this transformation, it preserves the most important information about \INCWAGE while grouping similar buckets in the domain that may have lower counts and would be challenging to preserve anyway.  
%\end{example}

%\begin{figure} 
%\includegraphics[width=0.5\textwidth]{fig/INCWAGE_hist}
%\caption{\label{fig:incwage}}
%\end{figure}

\paragraph*{\textbf{Step 4: Compress Domain}}

\begin{algorithm}[t]
\algsize{
\textbf{Input:} log (list of noisy measurements), $D$ (sensitive dataset), $\Omega$ (domain) \\
\textbf{Output:} $D$ (transformed sensitive dataset), $\Omega$ (transformed domain) \\
\nl \For{\emph{For each measurement $(\_\_, \tilde{\mu}, \sigma, \set{i})$ in log}}{
\nl Replace values for attribute $i$ in dataset $D$ using transformation:
$$ t \leftarrow \begin{cases} t & \tilde{\mu}_t \geq 3 \sigma \\
\varnothing & \text{otherwise} \end{cases} $$
\nl Modify domain accordingly, $\Omega_i \leftarrow \set{ t \mid \tilde{\mu}_t \geq 3 \sigma } \cup \set{\varnothing}$
}}
\caption{\label{alg:compress}Domain compression}
\end{algorithm}

In step (1), \mech was able to greatly reduce the domain size by incorporating information from IPUMS.  However, even after this domain encoding, some of the attributes in the data remain fairly sparse.  For example, only $17.4\%$ percent of counts in the \attr{VALUEH} marginal exceed $100$.  In this step, we answer all 1-way marginals, i.e., we pass $\mathcal{C} = \set{ \set{i} \mid i \in \mathcal{A} }$ into \cref{alg:measure}.  Every marginal is assigned an equal weight of $1$, with the exception of \INCWAGEA, which is assigned a weight of $2$.  

After obtaining noisy 1-way marginals, \cref{alg:compress} is called, which searches for domain elements for which the noisy count fell below the threshold of $3 \sigma$.  These domain elements were merged into a single ``other'' domain element, denoted $\varnothing$.  Later steps of \mech operate over the resulting transformed dataset and domain.  This step has two main benefits, similar to the ones from the domain encoding step.  First, it improves scalability of \pgm in later steps by reducing the domain size of the attributes.  Second, it ensures that the tuples generated by \mech (probably) have attribute values that actually occurred in the dataset.

\paragraph*{\textbf{Step 5: Select Measurements}}

The next step of \mech is to identify a collection of 2- and 3-way marginals that will later be measured.  This is one of the most crucial components of \mech, because the marginals selected in this step will ultimately determine the marginals that will be preserved in the generated synthetic data.  
%In this step we focus on \emph{what} to measure, whereas in the next step we focus on \emph{how} to measure those things while preserving differential privacy.  
It is important to note that this step selects measurements without using the sensitive dataset, although it does rely heavily on the provisional dataset. This algorithm does take the privacy budget $\epsilon$ as input, but it does not ``consume'' it --- it only uses it to determine weights to assign to each selected marginal.

\begin{algorithm}[t]
\algsize{
\textbf{Input:} $\hat{D}$ (the provisional dataset), $\epsilon$ (privacy budget) \\
\textbf{Output:} $\mathcal{C}$ (a collection of attribute subsets), $w$ (weights for each $C \in \mathcal{C})$ \\
\nl Construct a complete graph $G$ where vertices are attributes in the dataset, and edge $(i,j)$ is weighted according to the mutual information between attribute $i$ and attribute $j$ in the dataset $\hat{D}$.  Add $100$ to the edge weights for \marginal{SEX,CITY}, \marginal{SEX,\INCWAGEA} and \marginal{CITY,\INCWAGEA}.   \\
\nl Identify the maximum spanning tree (MST) of the graph, and for each edge in the tree, add the attribute pair to $\mathcal{C}$.  Also add \marginal{SEX,CITY}, \marginal{SEX,\INCWAGEA}, \marginal{CITY,\INCWAGEA}, and \marginal{SEX,CITY,\INCWAGEA} to $\mathcal{C}$ if they are not already included. \\
\nl For each pair of adjacent edges $(i,j), (i,k)$ in the MST, compute the marginals $M_{ij}(\hat{D})$, $M_{ik}(\hat{D})$, and $M_{ijk}(\hat{D})$.  Use \pgm to estimate $\tilde{M}_{ijk}$ from $M_{ij}$ and $M_{ik}$, and record the error in the estimate $E_{ijk} = \norm{M_{ijk} - \tilde{M}_{ijk}}_1$.  \\
\nl \revision{For each attribute $i$, construct a complete graph consisting of nodes that are neighbors of $i$ in the MST, and each edge $(j,k)$ in the new graph is assigned a weight of $E_{ijk}$.  Remove edges whose weight is below a threshold of $0.1$, and compute the maximum spanning tree of the resulting graph.  For each edge $(j,k)$ in the new MST, add the $(j,k)$ and $(i,j,k)$ marginals to $\mathcal{C}$.}\\
\nl Remove attribute subsets whose marginal is too large, i.e., $C \in \mathcal{C}$ such that $\prod_{i \in C} n_i \geq 10^6$. \\
\nl Assign weights to selected attribute subsets using formula:}
\scriptsize{
$$ w_C \propto \begin{cases}
8 & C = \marginal{SEX,CITY,\INCWAGEA}, \epsilon \leq 0.3 \\
4 & C = \marginal{SEX,CITY,\INCWAGEA}, \epsilon \geq 4.0 \\
6 & C = \marginal{SEX,CITY,\INCWAGEA}, 0.3 < \epsilon < 4.0 \\
2 & C \in \set{\marginal{SEX,CITY},\marginal{SEX,\INCWAGEA},\marginal{CITY,\INCWAGEA}} \\
1 & \text{otherwise}
\end{cases} $$}
%$w_C = 1$ for all $C \in \mathcal{C}$.  \\
%Assign $w_C = 2$ for $ 
%Assign $w_C=\begin{cases} 8 & \epsilon \leq 0.3 \\ 4 & \epsilon \geq 4.0 \\ 6 & \text{otherwise} \end{cases}$ for $C = \marginal{SEX,CITY,\INCWAGEA}$.  Normalize $w_C \leftarrow w_C / \sqrt{\sum_C w_C}$.
%\end{framed}
\caption{\label{alg:select}Marginal selection algorithm}
\end{algorithm}

\cref{alg:select} shows how \mech selects 2- and 3-way marginals for measurement.  This algorithm is inspired by a similar approach used by two other mechanisms for differentially private synthetic data \cite{zhang2017privbayes,chen2015differentially}.  It combines one principled step, which is to find the maximum spanning tree (MST) on the graph where edge weights correspond to mutual information between two attributes, with some additional heuristics to ensure that certain important attribute pairs are selected, and more heuristics to select some triples while keeping the graph tree-like.  A reader familiar with graphical models with recognize the MST step as the famous Chow-Liu algorithm for structure learning in a graphical model \cite{chow1968approximating}.  Intuitively, highly correlated marginals should be measured because attributes that are independent can trivially be preserved without direct measurement.  %Selecting attribute pairs that form a tree-like structure was done deliberately to ensure that \pgm would be able to run efficiently.  

\cref{fig:queries} shows the marginals selected by this algorithm in graphical format.  Each edge in the graph represents a pair of attributes whose marginal will be measured by \mech, and each triangle in the graph represents a triple of attributes whose marginal will be measured by \mech.  The structure of this measurement graph also corresponds to the structure of the graphical model used by \pgm.  The tree-like structure of the graph will ultimately allow \pgm to run efficiently.  The green subgraph corresponds to the \marginal{SEX,CITY,\INCWAGEA} clique, the black edges form a maximum spanning tree of the underlying correlation graph, and the dotted red edges are additional edges that enhance the expressive capacity of the model while retaining the tree-like structure. 
\begin{figure}
\includegraphics[width=\textwidth]{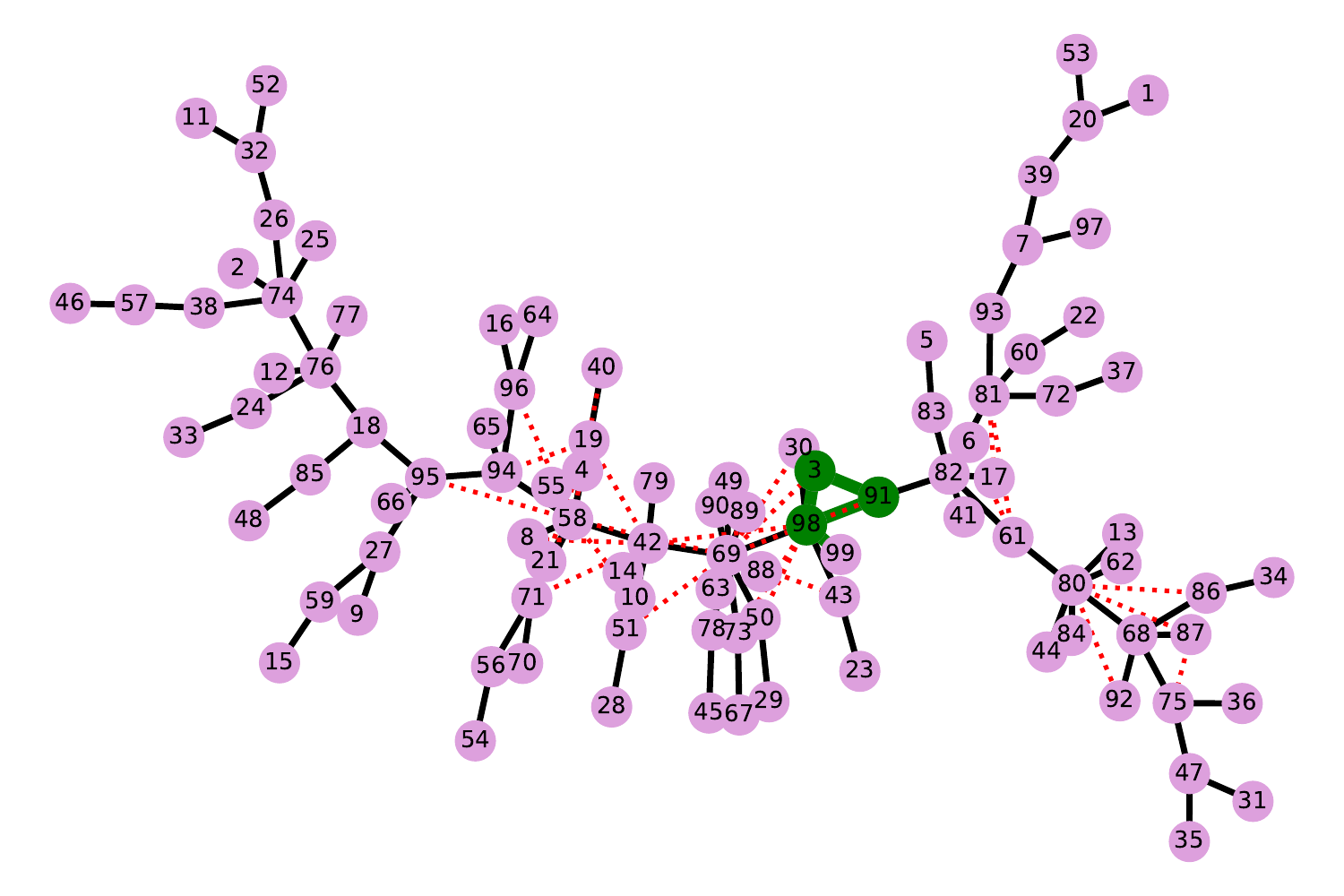}
\vspace{-2em}\caption{\label{fig:queries}A graphical depiction of the 2- and 3-way marginals selected by \mech.  Nodes correspond to attributes in the dataset, and edges correspond to marginals selected by \mech.  Nodes are labeled by the 2-digit code for the attribute given in \cref{table:domain}.  Black edges form a \emph{maximum spanning tree} of the underlying correlation graph.  Green nodes and edges correspond to the special \marginal{SEX, CITY, \INCWAGEA} attributes.  Dotted red edges identify the extra marginals chosen to improve expressive capacity of the model while maintaining tractability of \pgm.  All dotted red edges form a triangle, and for each of those \mech also included the 3-way marginal corresponding to the three nodes that make up the triangle in the list of selected marginals.\vspace{-0.7em}} 
\end{figure}
\paragraph*{\textbf{Step 6: Measure Marginals}}

The next step of \mech is to measure the marginals selected in the previous step with \cref{alg:measure}.  The result is a collection of noisy measurements contained within a measurement log, and suitable for post-processing with \pgm.

\paragraph*{\textbf{Step 7: Synthesize data}}

The next step of \mech is to combine the measurement logs from Steps 4 and 6 and pass them to \pgm, which returns a synthetic dataset whose marginals approximately match those in the measurement log.  Because the measurements in Steps 4 and 6 were made on the uncompressed and compressed domains, respectively, the measurements from Step 4 had to be re-expressed over the compressed domain.

\paragraph*{\textbf{Step 8: Reverse Transformation}}

The final step of \mech is to reverse the transformations made in Steps 4 and 3, to bring the data back to the original domain.  For Step 4, this requires evenly distributing any instances of $\varnothing$ among the original domain elements mapped to it.  For Step 3, this requires modifying \attr{VALUEH} and combining \INCWAGEA and \INCWAGEB back into a single attribute.  The details are given in \cref{alg:reverse} of the appendix.

\section{Extensions} \label{sec:extensions}

\begin{algorithm}[t]
\algsize{
\textbf{Input:} $D$ (sensitive dataset), log (measurements of 1-way marginals), \revision{$\rho$ (privacy parameter)}, $\mathcal{C}$ (initial set of $(i,j)$ pairs to measure; empty by default) \\
\textbf{Output:} $\mathcal{C}$ (final set of $(i,j)$ pairs to measure) \\
\nl Use \pgm to estimate all 2-way marginals $\bar{M}_{ij}$ from log \\
\nl Compute $L_1$ error between estimated 2-way marginal and actual 2-way marginal for all $i,j$: $q_{ij}(D) = \norm{M_{ij}(D) - \bar{M}_{ij}}_1$ (this is a sensitivity 1 quantity) \\
\nl Let $G = (\mathcal{A}, \mathcal{C})$ be the graph where attributes are vertices and edges are pairs of attributes \\
\nl Let $r$ be the number of connected components in $G$ \protect\footnotemark \\
\nl \revision{Let $\epsilon = \sqrt{\frac{8 \rho}{r-1}}$} \\
\nl \For{\emph{Repeat $r - 1$ times}}{
\nl Let $S$ be the set of all attribute pairs $(i,j)$, where $i$ and $j$ are in different connected components of $G$ \\
%\item remove all $(i,j)$ pairs from $S$ that are in the same connected component of $G$ 
\nl \revision{Select attribute pair $(i,j) $ by running the \texttt{exponential mechanism} with quality score function $q_{ij}$ on set $S$ and privacy parameter $\epsilon$.} \\
\nl Add attribute pair $(i,j)$ to $\mathcal{C}$ \\
}}
%\end{framed}
\caption{\label{alg:mst}Differentially private measurement selection}
\end{algorithm}
\footnotetext{If $\mathcal{C}$ is empty, this is just the number of attributes $d$.}

One limitation with \mech is that it is highly tailored to the setting of the NIST competition, and crucially relies on the existence of a public provisional dataset that can be used to select marginals.   In more general settings, we will not always have access to a provisional dataset that follows a similar distribution as the sensitive data.  For that reason, we propose \texttt{MST}, a general purpose mechanism that is inspired by the \mech mechanism, but doesn't rely on the existence of provisional data.  The basic mechanism is the same as \mech outlined in \cref{alg:mech}, with a couple minor exceptions.  First, the preprocessing transformations and corresponding reverse transformations are not done --- those were specific to the U.S. Census dataset used in the competition and not generally applicable beyond that setting.    Second, the measurement selection step, which previously relied on a provisional dataset to select correlated marginals, is replaced by a differentially-private version that uses the sensitive dataset.  \revision{\mst devotes $\frac{1}{3}$ of the RDP budget towards measurement selection, and uses the remaining $\frac{2}{3}$ of the RDP budget for measuring the marginals.   Privacy of $\mst$ follows by adaptive composition \cref{prop:composition}.  For completeness, this calculation is given in the appendix.}

The measurement selection algorithm is shown in \cref{alg:mst}.  Just like \cref{alg:select},  this algorithm tries to find a collection of attribute pairs that form a maximum spanning tree of an underlying correlation graph.  However, as it uses the sensitive dataset, it must do this in a differentially private way.  To achieve this, we first use a low-sensitivity approximation of the mutual information for assigning edge weights.  We assume that we already measured all 1-way marginals, so we can get reasonable estimates of 2-way marginals by invoking \pgm.\footnote{In this simple case, \pgm estimates 2-way marginals under an independence assumption, which could alternatively be achieved by multiplying the (noisy) one-way marginals together.}  The edge weights are then computed as the $L_1$ distance between the true 2-way marginal and the estimated one (a sensitivity $1$ quantity).   After computing the edge weights, \cref{alg:mst} can be seen as a differentially private version of Kruskal's algorithm \cite{kruskal1956shortest} for computing a maximum spanning tree.  It consists of $d-1$ steps (the number of edges in a spanning tree), and in each step, it adds a highly weighted edge that connects two different connected components.  In Kruskal's algorithm, the highest weighted edge is chosen, but this would not be differentially private.  \revision{We instead invoke the \texttt{exponential mechanism} to select a highly weighted edge in a differentially private way.  In principle, we could apply any private selection algorithm here, including report-noisy-max \cite{Dwork14Algorithmic} and the recently developed permute-and-flip mechanism \cite{mckenna2020permute}.  While permute-and-flip is known to dominate the exponential mechanism under $\epsilon$-DP \cite{mckenna2020permute}, the exponential mechanism enjoys a tighter privacy analysis under R\'enyi-DP \cite{cesar2021bounding}.}

The result of this algorithm is a collection of attribute pairs that will be measured by \mst.  \cref{alg:mst} has an optional argument, $\mathcal{C}$, which is an initial set of attribute pairs to measure.  If this is supplied, the algorithm will always include those in the result, and then constructs a maximum spanning tree around them.  This enables some extra flexibility that may be beneficial in certain settings where some marginals are more important than others, and need to be preserved even if they are not the most highly correlated.  For many applications, $\mathcal{C}$ can just be empty.  In the context of the competition, this feature is useful because the marginals relating \attr{SEX}, \attr{CITY}, and \INCWAGEA are very important (since their accuracy determines $\frac{1}{3}$ of the final score).

\revision{
\begin{theorem}[Privacy of \cref{alg:mst}] \label{thm:privacymst}
\cref{alg:mst} is $(\alpha, \alpha \rho)$-RDP for all $\alpha \geq 1$.
\end{theorem}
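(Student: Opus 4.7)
The plan is to isolate the data-dependent steps of \cref{alg:mst}, bound the RDP cost of each via \cref{prop:expprivacy}, and then apply adaptive composition. Of the numbered lines, only the quality-score computations in line 2 and the $r-1$ exponential-mechanism calls in line 8 touch $D$ at all. Everything else -- the PGM estimates $\bar M_{ij}$ derived from the input log, the graph $G$, the connected-component count $r$, and the scalar $\epsilon = \sqrt{8\rho/(r-1)}$ -- depends on $D$ only through the input log, whose privacy cost is accounted for separately by the $\tfrac{2}{3}/\tfrac{1}{3}$ split described in the paragraph preceding the theorem. Within the scope of this statement the log is an input, so those quantities may be treated as data-independent constants.

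Next, I would verify that each score has $L_1$-sensitivity at most $1$. Fix $(i,j)$ and neighbors $D \sim D'$. With $\bar M_{ij}$ held fixed, the reverse triangle inequality gives
$$|q_{ij}(D) - q_{ij}(D')| \le \|M_{ij}(D) - M_{ij}(D')\|_1 \le 1,$$
since a single added or removed record alters $M_{ij}$ in exactly one cell by $1$. Hence the maximum sensitivity $\Delta$ appearing in \cref{prop:expprivacy} is at most $1$ uniformly over candidate pairs.

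Applying \cref{prop:expprivacy} to each call with $\Delta = 1$ and privacy parameter $\epsilon = \sqrt{8\rho/(r-1)}$ shows that each call is $(\alpha, \alpha\rho/(r-1))$-RDP for every $\alpha \ge 1$. The candidate set $S$ in round $k$ is a deterministic function of the pairs chosen in rounds $1,\dots,k-1$ together with the fixed graph $G$, so the sequence fits the adaptive-composition setting of \cref{prop:composition}; summing the $r-1$ identical per-round budgets yields the claimed $(\alpha, \alpha\rho)$-RDP guarantee.

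The main place to be careful is confirming that the quantities controlling the privacy budget -- in particular $r$, which determines both the number of rounds and the per-round noise scale -- are themselves data-independent. They are: $r$ is the number of components of the graph with vertex set $\mathcal{A}$ and edge set equal to the input $\mathcal{C}$, both of which are inputs to the algorithm and do not depend on $D$. Once that bookkeeping is pinned down, the remainder is a routine composition argument.
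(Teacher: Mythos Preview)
Your proof follows exactly the paper's route: invoke \cref{prop:expprivacy} to bound each exponential-mechanism call as $(\alpha,\alpha\rho/(r-1))$-RDP, then apply \cref{prop:composition} across the $r-1$ rounds. You supply details the paper leaves implicit---the reverse-triangle-inequality verification that each $q_{ij}$ has sensitivity $1$, and the bookkeeping that $r$ and the per-round $\epsilon$ depend only on the inputs $(\mathcal{A},\mathcal{C})$ rather than on $D$---but the structure is identical.
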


\begin{proof}
Step $4a$ is $(\alpha, \alpha \frac{1}{8} \epsilon^2)$-RDP by \cref{prop:expprivacy}.  Substituting $\epsilon=\sqrt{\frac{8 \rho}{r-1}}$, we see that it is equivalent to $(\alpha, \alpha \frac{\rho}{r-1})$-RDP.  It is called $r-1$ times, so the entire mechanism is $(\alpha, \alpha \rho)$-RDP by \cref{prop:composition}. 
\end{proof}}

\section{Experiments} \label{sec:experiments}

\begin{table}[t]
    \def\arraystretch{1.1}
    \centering\small
    \begin{tabular}{cc|ccc|c}
  \hline
  $\epsilon$ & Team & 3-way & High order & Income & Overall \\ 
    & & marginals & conjunctions & inequality & \\\hline
  0.3 & RMcKenna (\mech) & 0.12 & 0.17 & 0.10 & 0.13 \\
  \rowcolor{blue!10} 0.3 & \mst & \textbf{0.11} & \textbf{0.16} & 0.10 & \textbf{0.12} \\
  0.3 & DPSyn & 0.15 & 0.31 & \textbf{0.07} & 0.18 \\
  %\rowcolor{blue!10} 0.3 & \texttt{MST} & \textbf{0.13} & 0.18 & 0.22 & 0.18 \\
  0.3 & PrivBayes & 0.19 & 0.29 & 0.18 & 0.22 \\
  0.3 & Gardn999 & 0.21 & 0.32 & 0.25 & 0.26 \\
  0.3 & UCLANESL & 0.57 & 0.72 & 0.22 & 0.50 \\
  \hline
  1.0 & RMcKenna (\mech) & \textbf{0.09} & \textbf{0.15} & \textbf{0.04} & \textbf{0.09} \\
  \rowcolor{blue!10} 1.0 & \mst & \textbf{0.09} & \textbf{0.15} & 0.05 & 0.10 \\
  1.0 & DPSyn & 0.11 & 0.23 & 0.05 & 0.13 \\
  %\rowcolor{blue!10} 1.0 & \texttt{MST} & \textbf{0.09} & 0.16 & 0.20 & 0.15 \\
  1.0 & PrivBayes & 0.17 & 0.26 & 0.09 & 0.17 \\
  1.0 & Gardn999 & 0.18 & 0.28 & 0.22 & 0.23 \\
  1.0 & UCLANESL & 0.42 &0.53 & 0.28 & 0.41 \\
  \hline
  8.0 & RMcKenna (\mech) & \textbf{0.07} & \textbf{0.14} & 0.04 & \textbf{0.08} \\
  \rowcolor{blue!10} 8.0 & \mst & 0.08 & \textbf{0.14} & 0.05 & 0.09 \\
  8.0 & DPSyn & 0.09 & 0.20 & \textbf{0.02} & 0.10 \\
  %\rowcolor{blue!10} 8.0 & \texttt{MST} & \textbf{0.08} & 0.14 & 0.22 & 0.15\\
  8.0 & PrivBayes & 0.13 & 0.23 & 0.09 & 0.15 \\
  8.0 & Gardn999 & 0.17 & 0.26& 0.24 & 0.22\\
  8.0 & UCLANESL & 0.35 & 0.41 & 0.25 & 0.34 \\
  \hline
\end{tabular}
\vspace{0.5em}
\caption{\label{table:experiments} Evaluation of \mech and other mechanisms from competing teams, broken down by the three scoring metrics: 3-way marginals, high-order conjunctions, and income inequality.  \texttt{MST} is also shown for comparison, even though that mechanism was not submitted during the competition.  If it was submitted instead of \mech, it would have placed first overall.}
\end{table}

In this section we discuss the experimental evaluation carried out by the contest organizers, and how \mech compared to the submissions from other teams.  We offer our own insights into the numbers and explanations for the differences between mechanisms.

Evaluations were carried out on U.S. Census data for two different states: Arizona and Vermont.  These datasets had \num{293999} and \num{211228} records, respectively.  Scores were calculated separately for each of the three evaluation metrics described in \cref{sec:eval}.  The final score was calculated by averaging the scores for each metric, each value of $\epsilon$, and each of the two datasets.  In \cref{table:experiments} we show the score breakdown by metric (averaged over Arizona and Vermont) for the top five submitted algorithms.  We also evaluated our new mechanism (\mst) and included it as an extra row (highlighted), even though it was not evaluated by the contest organizers at the time of the competition.  This was possible because the final evaluation datasets were released after the competition, and the script used to evaluate the synthetic data was provided as part of the competitor pack.  

\mst was described in \cref{sec:extensions}.  We instantiate it with $\mathcal{C} = \set{\marginal{SEX,CITY}, \marginal{SEX,\INCWAGEA},\\ \marginal{CITY,\INCWAGEA}}$, and add $\marginal{SEX,CITY,\INCWAGEA}$ to the returned result as well.  These marginals are weighted using the same formula as \mech (see \cref{alg:select}).  

Among the contest submissions, \mech consistently performed the best, for most metrics and values of $\epsilon$.  \revision{Compared to DPSyn, it did a much better job at answering 3-way marginals and high order conjunctions, but performed slightly worse at handling the income inequality metric.   Compared to every other mechanism, \mech did better on every metric}.  %These findings confirm the merits of measuring overlapping marginals, as \mech and PrivBayes do, since these  mechanisms performed the best on the 3-way marginal metric.  DPSyn and Gardn999 clustered attributes into disjoint groups for measurement, losing information about the correlation between groups, and their scores on the 3-way marginal metric suffered as a result.

Generally speaking, \mech and DPSyn (and to a lesser extent PrivBayes) seemed to be the only mechanisms that scored well on the income inequality metric, which is surprising given that it was the simplest metric and only required preserving one 3-way marginal accurately.  This raises an important point: mechanisms that understood the evaluation criteria well, and designed their mechanisms around it, generally performed better than mechanisms that just tried to generate good synthetic data without thinking about how its utility would be evaluated.  \mech and DPSyn did a good job of designing their mechanism for the task at hand, which was an important contributing factor for why they outperformed the other solutions.

While \mech relied heavily on the provisional dataset for measurement selection, the more general variant \mst still performs well, without explicitly relying on the provisional data.  In fact, \mst would have won first place if it was submitted instead of \mech at the time of the competition;  it was only slightly worse than \mech and still better than the other submissions. \revision{The difference in performance between \mst and \mech was at most $0.01$ for every metric and privacy budget evaluated.} 

%\revision{As shown in \cref{table:experiments}, \mst even out-performed \mech on the 3-way marginal metric, which is remarkable in light of the fact that it allocated one third of the privacy budget for measurement selection and two thirds of the privacy budget for taking the measurements.  This suggests that \mech may have overfit to the provisional dataset.  Indeed, we confirmed manually that the two way marginals selected using the exponential mechanism achieved a better score\footnote{Here, the ``score'' we are referring to is the quality score function used in \cref{alg:mst}.} than the ones derived from the provisional dataset, even at $\epsilon=0.3$.  Thus, at least in this setting, provisional data is not necessary to generate high-quality synthetic data.
\revision{
At $\epsilon=0.3$, \mst even achieved a smaller overall score than \mech.  \mech measures both $2$- and $3$-way marginals, while \mst only measures $2$-way marginals.  Since the privacy budget is relatively small, it makes sense that fewer measurements would work better here.  A promising direction for future work is to adaptively select the number of marginals to measure based on the privacy budget and the amount of data available.}

For a much more comprehensive evaluation of these mechanisms, we refer the reader to \cite{bowen2019comparative}, which goes well beyond the metrics used in the competition to evaluate these mechanisms.  This work is also a useful resource that summarizes each mechanism at a high level.  Their results generally show that \mech was the best performing mechanism for many of the additional evaluation metrics not used as part of the official scoring criteria.  This suggests that \mech produces the most generally useful synthetic data among the submitted mechanisms.  

\section{Related Work}

Differentially private synthetic data has been an active area of research for several years.  
%
% Small DB
One of the earliest mechanisms proposed for this task was the ``small database mechanism'' \cite{Dwork14Algorithmic}, which instantiates the exponential mechanism over a set of small databases to select one that is statistically similar to the true data.  Unfortunately, this mechanism is not able to run in practical settings, as it requires enumerating all possible datasets of a fixed sizes, resulting in a combinatorial explosion even for small dataset sizes (especially if the domain size is large). 

% Similar framework

In the competition, the top four submissions all followed the same basic template outlined in \cref{fig:template}: they selected and measured a collection of marginals, and then used those to estimate synthetic data.  Moreover this approach has been applied more generally in the literature \cite{zhang2017privbayes,chen2015differentially,bindschaedler2017plausible,zhang2020privsyn}. One key difference that sets our approach apart is \pgm.  While \cite{zhang2017privbayes,chen2015differentially} do leverage graphical models to generate synthetic data, their approach is limited in how it makes use of the noisy measurements: they more or less treat the noisy marginals as the true marginals (with some lightweight post-processing), whereas \pgm makes use of all available measurements to resolve inconsistencies and boost utility in a principled manner.  
An alternative method for resolving inconsistencies and generating synthetic data from noisy marginals is proposed in \cite{zhang2020privsyn}.  
%however no direct comparison to \pgm was made, so it is not clear how it compares empirically.  
This method does not construct an intermediate representation of the data distribution as \pgm does.
As a result, their consistency resolution step only ensures \emph{local consistency} (i.e., that all marginals internally agree on the common marginals), and does not satisfy the stronger notion of \emph{global consistency} (i.e., that there is a data distribution that has all stated marginals), as \pgm does.  %While  \cite{zhang2020privsyn} does have the advantage that it does not have the strict structural requirements on the measured marginals that \pgm does.  

%Many existing mechanisms fall into the same basic framework as \mech, outlined in \cref{fig:template} \cite{zhang2017privbayes,chen2015differentially,zhang2020privsyn}

% GAN

Another popular class of approaches for differentially private synthetic data is based on generative adversarial networks (GANs) \cite{goodfellow2014generative}.  Several differentially private GANs have been proposed for the purpose of generating synthetic data \cite{jordon2018pate,zhang2018differentially,tantipongpipat2019differentially,frigerio2019differentially,xie2018differentially,beaulieu2019privacy,abay2018privacy,torfi2020differentially,torkzadehmahani2019dp}.  In fact, two teams in the NIST competition adopted a GAN-based approach (UCLANESL in \cref{table:experiments}, and one other team that did not place in the top five), however, their scores were not generally competitive with the other approaches that used the marginal-based framework.  GANs are notoriously hard to train in practice \cite{salimans2016improved} and when differential privacy constraints are enforced, it is even more difficult.  It typically requires running an algorithm like DP-SGD \cite{abadi2016deep} to train, and if it fails to converge (which is common) the privacy budget used for training is essentially wasted.  Setting the right hyper-parameters is also a major challenge for this approach.

% Other approaches

% Evaluating synthetic data

Another important and related problem is how to evaluate the quality of synthetic data \cite{snoke2016general,bowen2019comparative,hittmeir2019utility,arnold2020really}.  Beyond the metrics used in the NIST competition, one alternative is pMSE, which is a general measure of distributional similarity \cite{snoke2016general}.  Another alternative measure is machine learning efficacy, or how well the synthetic data supports machine learning applications \cite{hittmeir2019utility}.  A number of other measures for evaluating synthetic data can be found in \cite{bowen2019comparative}.  We believe that there may be no universal answer to this question: it should ultimately depend on the data and its use cases.  In general, it would be nice to have a mechanism that can automatically adapt to an analyst-provided \emph{workload}, and generate synthetic data that provides high utility on the queries and tasks in that workload.  Several workload-adaptive mechanisms exist, but they are generally restricted to settings where the full high-dimensional histogram can be explicitly materialized in vector form, and are thus unable to scale to high-dimensional domains \cite{mckenna2018optimizing,li2010optimizing,hardt2010simple,gaboardi2014dual,Dwork14Algorithmic}.  When combined with \pgm, the scalability (and utility) of some of these mechanisms can be improved, however \cite{mckenna2019graphical}.

\section{Conclusions}

In this paper, we described \mech, the winning mechanism from the NIST differential privacy synthetic data competition, and \mst a new mechanism inspired by \mech that works almost as well, without relying on public provisional data.  While these mechanisms are state-of-the-art, the problem of differentially private synthetic data is far from solved.  Nevertheless, we believe our basic framework centered around \pgm can serve as a core component of new mechanisms for this task.  \pgm allows the mechanism designer to focus on \emph{what to measure}, rather than \emph{how to post-process} those measurements to get synthetic data while extracting the most utility from them.  \revision{In fact, in the final round of the recently completed follow-up challenge, the NIST 2020 Temporal Map Challenge, \emph{both} the first and second place teams used \pgm for post-processing, with each team developing novel techniques for measurement selection.}

\clearpage
%\interlinepenalty=10000
\bibliographystyle{abbrv}
\bibliography{refs}

\begin{appendix}
\newpage 
\section{Supplementary Material}

\revision{
\paragraph{\textbf{Noise Calibration for MST}}
We begin by calculating our total RDP budget from $(\epsilon, \delta)$ by invoking \cref{prop:rdpdp}.  In particular, we find the largest value of $\rho$ such that $(\alpha, \alpha \rho)$-RDP implies $(\epsilon, \delta)$-DP by \cref{prop:rdpdp}.  We accomplish this numerically.  We will now divide our RDP budget ``$\rho$'' equally among the three steps (4) and (5) and (6) of \mst.
% \footnote{that is, \cref{alg:mech}, where \cref{alg:mst} replaces \cref{alg:select} in step (5).}.  
To achieve $\frac{\rho}{3}$-RDP in steps (4) and (6), it suffices to set $\sigma = \sqrt{\frac{3}{2 \rho}}$ by \cref{prop:rdpgauss}.  To achieve $\frac{\rho}{3}$-RDP in step (5), we simply call \cref{alg:mst} with privacy parameter $\frac{\rho}{3}$.  The correctness of this is proven in \cref{thm:privacymst}.  The entire algorithm satisfies $\rho$-RDP by three-fold composition (\cref{prop:composition}), which translates to $(\epsilon, \delta)$-DP by \cref{prop:rdpdp}.}
\begin{figure}[h]
\includegraphics[width=0.9\textwidth]{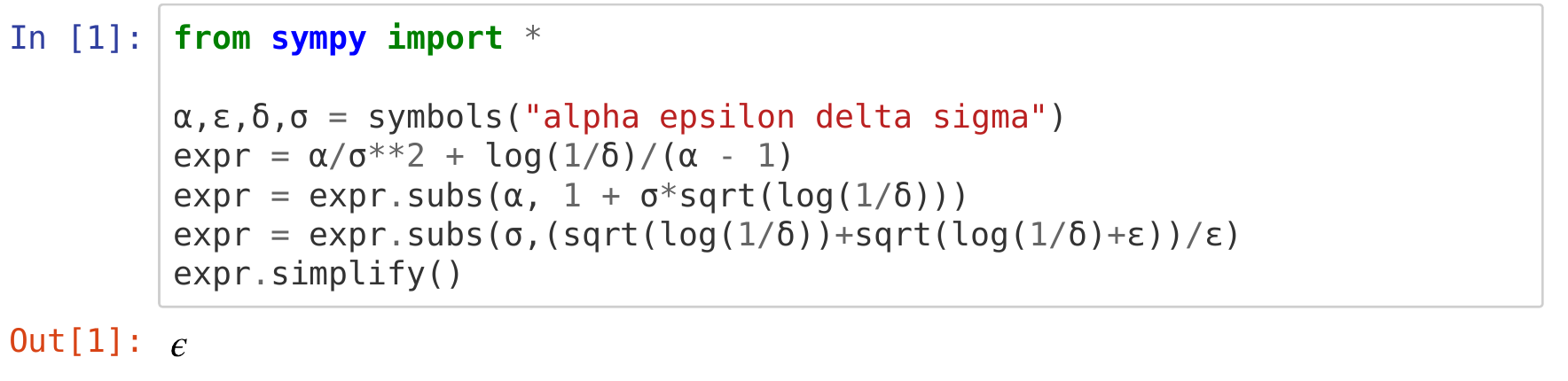}
\caption{ \label{fig:proof-privacy} \texttt{sympy} code to augment the proof of \cref{thm:privacy}}
\end{figure}
\revision{
\paragraph{\textbf{Additional Experiment on MST vs. NIST-MST}}

In \cref{sec:experiments}, we evaluated \mst in the context of the challenge test suite and evaluation metrics, and found that it performed comparably to \mech.  In this section, we compare the quality of the marginals selected from public data (\cref{alg:select}) and from the sensitive data (\cref{alg:mst}) using various privacy budgets.   Specifically, we logged the marginals selected by \mst, and compared them to the publicly chosen marginals used in \mech (ignoring the extra 2- and 3-way marginals selected in step 4) using the mutual information criteria described in \cref{alg:select}.  We also show the scores that would be achieved by the best marginals (i.e., the true maximum spanning tree), as well as the scores that would be achieved by a random spanning tree.  As shown in the table below, both the publicly chosen marginals and the privately chosen marginals nearly match the score acheived by the best marginals, for both the Arizona and Vermont datasets.  The publicly chosen marginals are slightly better at $\epsilon=0.3$ and $\epsilon=1.0$, and slightly worse at $\epsilon=8.0$.  Both variants are much better than the random baseline.% choosing random marginals.  

\begin{center}
\begin{tabular}{c|cc}
Selected Marginals & Arizona & Vermont \\ \hline
Best & 73.64 & 63.32 \\
Public (\cref{alg:select}) & 72.81 & 62.79 \\
$\epsilon=0.3$ (\cref{alg:mst}) & 70.53 & 61.15 \\
$\epsilon=1.0$ (\cref{alg:mst}) & 72.27 & 62.35 \\
$\epsilon=8.0$ (\cref{alg:mst}) & 73.05 & 63.09 \\
Random & 8.55 & 6.95 \\
\end{tabular}
\end{center}
}
\paragraph{\textbf{Additional Algorithms}}

\begin{algorithm}[h]
\algsize{
\textbf{Input:} $\mu$ (vector of fractional counts), n (total number of records to generate) \\
\textbf{Output:} column (synthetic column of data) \\
\nl Generate $\floor{\mu_t}$ items with value $t$ and add to column for each $t$ in domain \\
\nl Calculate remainders, $ p_t = \mu_t - \floor{\mu_t} $ \\
\nl Sample $n - \sum_t \floor{\mu_t}$ items (without replacement) from distribution proportional to $p_t$, and add to column \\
\nl Shuffle values in column \\
}
\caption{\label{alg:synthcol}Synthetic column}
\end{algorithm}

\begin{algorithm}[h]
\algsize{
\textbf{Input:} graphical model \\
\textbf{Output:} dataset (synthetic dataset) \\
\nl \revision{Initialize the set of processed attributes to the empty set} \\
\nl \For{\emph{For each attribute i}}{
\nl Let $C$ be the set of all neighbors of $i$ in the graphical model, intersected with the set of processed attributes \\
\nl \For{\emph{Group data by $C$, and for each group in $C$}}{
\nl Calculate $\mu$ from the graphical model, the vector of fractional counts for every possible value of attribute $i$, for the given group of other attributes \\
\nl Generate synthetic column for this group using \cref{alg:synthcol} \\
\nl Add this partial column to the grouped rows in the dataset \\
}
\nl \revision{Add i to the set of processed attributes}\\
}}
\caption{\label{alg:synthdata}Synthetic data generation}
\end{algorithm}

\begin{algorithm}[h]
\algsize{
\textbf{Input:} $D$ (sensitive dataset) \\
\textbf{Output:} $D$ (transformed sensitive dataset) \\
\nl Compute \attr{DIGITS} from \INCWAGEB using \cref{alg:digits} \\
\nl Replace \attr{VALUEH} and \INCWAGE attributes in $D$ using transformations: \\
\scriptsize{
\begin{minipage}{0.45\textwidth}
\begin{align*}
&\attr{VALUEH} = \begin{cases}
5 \cdot \attr{VALUEH} & \attr{VALUEH} \leq \num{5000} \\
\num{9999998} & \attr{VALUEH} = 5001 \\
\num{9999999} & \attr{VALUEH} = 5002 \\
\end{cases} \\
\end{align*}
\end{minipage}%
\begin{minipage}{0.45\textwidth}
\begin{align*}
&\INCWAGE =  
\begin{cases}
100 \cdot \INCWAGEA + \texttt{DIGITS} & \INCWAGEA \leq 50 \\
\num{9999998} & \INCWAGEA = 51 \\
\end{cases} \\ \end{align*}
\end{minipage}}}
\caption{\label{alg:reverse}Reverse transformation of \cref{alg:preprocess}}
\end{algorithm}

\begin{algorithm}[h]
\algsize{
\textbf{Input:} $k$ (a value for \INCWAGEB) \\
\textbf{Output:} $l$ (a value for \texttt{DIGITS}) \\
\nl Let $L = \set{0, \dots, 99}$ \\
\nl Let $m=[100, 20, 50, 25, 10, 5, 2, 1]$ \\
\nl \For{For $i = 0, \dots, k$}{
\nl Let $L_i = \set{ l \in L \mid l \equiv 0 \mod m_i }$ \\
\nl Let $ L = L \setminus L_i $ \\
}
\nl Sample $l$ uniformly from $L_k$ \\
}
\caption{\label{alg:digits} Convert \INCWAGEB to \texttt{DIGITS}}
\end{algorithm}

\end{appendix}

\end{document}